\newcommand{\Rmnum}[1]{\expandafter\@slowromancap\romannumeral #1@}
\def \h#1{\widehat{#1}}
\def \t#1{\widetilde{#1}}
\def \b#1{\overline{#1}}
\def \c#1{\accentset{\circ}{#1}}
\def \th#1{\widehat{\widetilde{#1}}}
\def \hb#1{{\widehat{\overline{#1}}}}
\def \bh#1{{\widehat{\overline{#1}}}}
\def \tb#1{\widetilde{\overline{#1}}}
\def \bt#1{\widetilde{\overline{#1}}}
\def \dh#1{\underaccent{\hat}{#1}}
\def \db#1{\underaccent{\bar}{#1}}
\def\dwh{\underaccent{{\cc@style\widehat{\mskip10mu}}}}
\def \dt#1{\underaccent{\tilde}{#1}}
\def \dth#1{\underaccent{\hat}{\underaccent{\tilde}{#1}}}
\def \dtb#1{\underaccent{\bar}{\underaccent{\tilde}{#1}}}
\def \dhb#1{\underaccent{\bar}{\underaccent{\hat}{#1}}}
\makeatother \numberwithin{equation}{section}
\newtheorem{prop}{Proposition}
\begin{document}
\title {Solutions to the non-autonomous ABS lattice equations: Casoratians and bilinearization}
\author{Ying Shi\footnote{Corresponding author. E-mail: shiying0707@shu.edu.cn},~~
Da-jun Zhang\footnote{E-mail: djzhang@staff.shu.edu.cn},~~Song-lin Zhao\\
{\small\it Department of Mathematics,
 Shanghai University, Shanghai 200444,  P.R. China}}
\maketitle
\date{}

\begin{abstract}
In the paper non-autonomous H1, H2, H3$_\delta$ and Q1$_\delta$ equations in the ABS list
are bilinearized. Their solutions are derived in Casoratian form. We also list out some Casoratian shift formulae which are used to verify Casoratian solutions.
\end{abstract}

\vskip 5pt \noindent
\textbf{Key words}: non-autonomous ABS list, Casoratian, bilinear, soliton solutions\\
\textbf{PACS}: 02.30.Ik, 05.45.Yv


\section{Introduction}

The ``discrete integrable systems'' has been a popular topic  and is
still drawing more and more attention. Particularly in the recent ten
years it received much progress. The property of multidimensional
consistency\cite{Nijhoff-Glasgow-2001,BS-IMRN-2002,ABS-CMP-2003}
provides an approach to investigate integrability for the discrete
systems defined on an elementary quadrilateral:
\begin{equation}\label{eq-Q}
Q(u_{n,m},u_{n+1,m},u_{n,m+1},u_{n+1,m+1};p,q)=0,
\end{equation}
where $p,q$ are spacing parameters. The ABS list\cite{ABS-CMP-2003}
contains all quadrilateral lattice equations of the above form which
are consistent-around-the-cube(CAC). In the ABS list the spacing
parameters $p,q$ can either constants or functions $p_n$ and $q_m$,
which corresponds to autonomous case or non-autonomous case, respectively.
In general an autonomous system means a differential/difference model with constant coefficients while a
non-autonomous one  means the model has coefficients  varying with
independent variables but the model can not be transformed back to an
autonomous one. Obviously, the ABS lattice equations themselves are
automatically non-autonomous in the sense of taking $p=p_n, q=q_m$,
and this non-autonomous case still keeps the CAC property.

Integrable non-autonomous systems have its own importance. It is
known that most of discrete Painlev\'e equations are non-autonomous
ordinary difference equations. In addition, for integrable
non-autonomous forms of partial difference equations, their
reductions usually lead to integrable non-autonomous mappings, which
quite often are discrete Painlev\'e equations. To get a
non-autonomous version of a partial difference equation, taking
\eqref{eq-Q} as an example, one can replace constant lattice
parameters $(p,q)$ by $(p_{n,m},q_{n,m})$, but the integrability
should be kept. Many criterions, such as singularity confinement,
conservation laws and algebraic entropy, have been used to check
integrability for the non-autonomous systems, for both ordinary and
partial difference
cases\cite{Grammaticos-PRL,Papag-PLA,Hydon-JMAA,GRamani-LMP}.
Besides, it is also possible to deautonomise a discrete bilinear
system if it contains spacing parameters. With suitable
deautonomisation the obtained non-autonomous bilinear systems admit
$N$-soliton solutions expressed through deformed discrete
exponential functions, (see \cite{Kajiwara-Ohta-1,Kajiwara-Ohta-2} as examples).

Recently, many solving approaches have been developed to find
solutions for autonomous lattice equations in the ABS
list\cite{Q4,Q3,NAJ-PartI,HZ-PartII,Atkinson-CMP,Nijhoff-IMRN,BJ-IST,ZH-H1,SZ-SIGMA-2010}.
In \cite{HZ-PartII} the H1, H2, H3$_\delta$ and Q1$_\delta$
equations in the ABS list were bilinearized and their solutions were
derived in Casoratian form. In the present paper we will repeat the
treatment of \cite{HZ-PartII} to get bilinear forms as well as
solutions in Casoratian form for some non-autonomous ABS lattice
equations. As we have mentioned before, the ABS lattice equations
with spacing parameters $(p_n,q_m)$ are automatically non-autonomous
and still CAC. Their integrable aspects are also double checked by
singularity confinement and algebraic entropy
approaches\cite{GRamani-LMP}.

The paper is organized as follows. Section 2 contains some basic
notations for discrete systems and Casoratians and a list of
non-autonomous ABS lattice equations. In Section 3 the
non-autonomous H1, H2, H3$_\delta$ and Q1$_\delta$ equations are
bilinearized and their solutions are derived in Casoratian form. The
Appendix contains a collection of Casoratian formulae of
non-autonomous case.


\section{Preliminaries}

Conventionally, we use tilde/hat notations to express the shifts in
$n$/$m$ directions, for example,
\begin{equation*}\label{Notations}
    u=u_{n,m},~~\t u=u_{n+1,m},~~\dt u=u_{n-1,m},~~\h u=u_{n,m+1},~~\dh u=u_{n,m-1},~~\th u=u_{n+1,m+1}.
\end{equation*}
By these notations the lattice equation \eqref{eq-Q} is rewritten as
\begin{equation}\label{PDE}
    \mathcal{Q}(u,\t u, \h u, \th u; p,q)=0.
\end{equation}

The non-autonomous ABS list is as
follows\cite{ABS-CMP-2003,GRamani-LMP}:
\begin{subequations}\label{H/Q-list}
\begin{align}
\label{H1} \mbox{H1}:~~&(u-\h{\t u}) (\t u-\h
u)+q_m-p_n=0,\\
\label{H2} \mbox{H2}:~~&(u-\h{\t u}) (\t u-\h u)+ (q_m-p_n) (u+\t
u+\h u+\h{\t
u})+q_m^2-p_n^2 =0,\\
\label{H3} \mbox{H3}_\delta:~~& p_n(u \t u+\h u \h{\t u})- q_m(u \h
u+\t u \h{\t
u})+\delta (p_n^2-q_m^2) =0,\\
\label{Q1} \mbox{Q1}_\delta:~~& p_n  (u-\h u) (\t u-\h{\t u})- q_m
(u-\t u) (\h
u-\h{\t u})+\delta^2 p_n  q_m (p_n-q_m) =0,\\
\mbox{Q2}:~~&p_n  (u-\h u) (\t u-\h{\t u})- q_m (u-\t u) (\h u-\h{\t
u})+ p_n  q_m (p_n -q_m) (u+\t u+\h u+\h{\t u})\nonumber\\
~~&~~-p_n q_m (p_n -q_m) (p_n ^2-p_n  q_m+q_m^2)
=0,\label{Q2}\\
\mbox{Q3}_\delta:~~&\sin(p_n+q_m)(u \h{\t u}+\t u \h u)-\sin p_n(u
\t u+\h u \h{\t u}) -\sin q_m(u \h u +\t u \h{\t u})\nonumber\\
~~&~~+\delta^2 \sin p_n\sin
q_m\sin(p_n+q_m)=0,\label{Q3}\\
\mbox{Q4}:~~&\text{sn}(p_n+q_m)(u \h{\t u} + \t u \h
u)-\text{sn}p_n(u \t u + \h u \h{\t u} )-\text{sn}q_m(u \h u + \t u
\h{\t u})\nonumber\\
~~&~~+\text{sn}p_n\text{sn}q_m\text{sn}(p_n+q_m)(1+k^2u\t
u\h u\th u)=0,\label{Q4}
\end{align}
\end{subequations}
where $\delta$ is a constant, $p_n=p(n)$ and $q_m=q(m)$ are
arbitrary non-zero functions of discrete variables $n$ and $m$,
respectively.

Here the forms of Q3$_\delta$ and Q4 are in accordance with the
autonomous version via the parametrization introduced by Hietarinta
\cite{Hietari-JNMP}. We omit A1$_\delta$ and A2 from the above list
because of the equivalence between A1$_\delta$ and Q1$_\delta$ by
$u\to(-1)^{n+m}u$, as well as A2 and Q3$_{\delta=0}$ by $u\to
u^{(-1)^{n+m}}$.

The discrete version of Wronskian is Casoratian, which is a
determinant of the Casorati matrix:
\begin{subequations}\label{psi-T}
\begin{equation}
  \label{eq:C-gen}
  f=|\psi(n,m,l_1),\psi(n,m,l_2),\cdots,\psi(n,m,l_N)|=|l_1,l_2,\cdots,l_N|,
\end{equation}
where the basic column vector is
\begin{equation}\label{psi-vector}
    \psi
    (n,m,l)=(\psi_1(n,m,l),\psi_2(n,m,l),\cdots,\psi_N(n,m,l))^T,
\end{equation}
\end{subequations}
and the shifts are in $l$ direction. Using the standard short-hand
notations\cite{Freeman-Nimmo-KP}, we list the following often-used
$N$th-order Casoratians
\begin{equation*}
|\h{N-1}|=|0,1,\cdots,N-1|,~~
|\h{N-2},N|=|0,1,\cdots,N-2,N|,~~|-1,\t{N-1}|=|-1,1,2,\cdots,N-1|.
\end{equation*}

As in \cite{HZ-PartII}, since in \eqref{psi-T} there are three
direction variables, say $n,m$ and $l$, one can introduce the
operators $E^\nu$ $(\nu=1,2,3)$ by
\begin{equation}\label{C-gen0}
    E^1\psi=\widetilde{\psi}=\psi(n+1,m,l),~~E^2\psi=\widehat{\psi}=\psi(n,m+1,l),~~E^3\psi=\bar{\psi}=\psi(n,m,l+1),
\end{equation}
then define a Casoratian w.r.t $E^\nu$-shift,
\begin{equation}\label{C-gen1}
    |\widehat{N-1}|_{[\nu]}=|\psi,E^\nu\psi,(E^\nu)^2\psi,\cdots,(E^\nu)^{N-1}\psi|,~~(\nu=1,2,3).
\end{equation}
For these Casoratians we have
\begin{prop}\label{P:2.1}
The Casoratians
\begin{equation}
|\widehat{N-1}|_{_{[1]}}=|\widehat{N-1}|_{_{[2]}}=|\widehat{N-1}|_{_{[3]}},
\label{cas-iden}
\end{equation}
if their column vector $\psi(n,m,l)$ satisfies the relations
\begin{equation}\label{Prop-1}
\alpha_n\psi=\b \psi-\t \psi,~~ \beta_m\psi=\b \psi-\h \psi,
\end{equation}
where $\alpha_n $ and $\beta_m$ are arbitrary functions of discrete
variables $n$ and $m$, respectively, i.e. $\alpha_n=\alpha(n)$,
$\beta_m=\beta(m)$.
\end{prop}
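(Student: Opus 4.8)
The plan is to recast the hypotheses \eqref{Prop-1} as statements about the shift operators of \eqref{C-gen0} and then compare the three Casoratians of \eqref{C-gen1} column by column. Solving \eqref{Prop-1} for the forward shifts, the relations read $E^1\psi = E^3\psi - \alpha_n\psi$ and $E^2\psi = E^3\psi - \beta_m\psi$. Since the two equalities asserted in \eqref{cas-iden} are completely symmetric in the pairs $(E^1,\alpha_n)$ and $(E^2,\beta_m)$, it suffices to establish $|\widehat{N-1}|_{[1]} = |\widehat{N-1}|_{[3]}$; then $|\widehat{N-1}|_{[2]} = |\widehat{N-1}|_{[3]}$ follows verbatim with $\beta_m$ replacing $\alpha_n$.

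The core of the argument is to show that the $j$-th column $(E^1)^j\psi$ of the first Casoratian expands as $(E^3)^j\psi$ plus a linear combination of the lower columns $(E^3)^i\psi$, $0\le i<j$, where the combining coefficients are the same scalar (a function of $n$ only) for every component of the vector $\psi$. I would prove this by induction on $j$. The inductive step $(E^1)^j\psi \mapsto (E^1)^{j+1}\psi$ rests on two facts: first, that $E^1$ and $E^3$ commute, being pure shifts in distinct variables, so that $E^1(E^3)^i\psi = (E^3)^i E^1\psi = (E^3)^{i+1}\psi - \alpha_n(E^3)^i\psi$, where the last step reuses $E^1\psi = E^3\psi-\alpha_n\psi$ together with the commutation of $E^3$ and multiplication by $\alpha_n$ (since $\alpha_n$ is independent of $l$). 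The decisive observation is that, whatever the lower-order coefficients work out to be, the coefficient of the top term $(E^3)^j\psi$ stays equal to $1$ throughout the recursion.

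Granting this triangular expansion, the matrix $B$ whose columns are $(E^1)^j\psi$ and the matrix $A$ whose columns are $(E^3)^i\psi$ (these are exactly the Casorati matrices of \eqref{C-gen1} for $\nu=1$ and $\nu=3$) are related by $B = AC$, where $C$ is the upper-triangular transition matrix of expansion coefficients. Because the diagonal of $C$ consists entirely of ones, $\det C = 1$, whence $\det B = \det A$, that is, $|\widehat{N-1}|_{[1]} = |\widehat{N-1}|_{[3]}$.

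I expect the only genuine subtlety to be the bookkeeping of the $n$-dependence of $\alpha_n$: because $\alpha_n$ varies with $n$, the operator $E^1$ does \emph{not} commute with multiplication by $\alpha_n$, since $E^1(\alpha_n\,\cdot) = \alpha_{n+1}E^1(\cdot)$, so the off-diagonal coefficients accumulate shifted values $\alpha_{n+1},\alpha_{n+2},\dots$ as the induction proceeds. This does not affect the conclusion, however, because those shifts alter only the strictly upper-triangular entries of $C$ and never its unit diagonal, and the determinant is insensitive to them. The one thing to verify with care is therefore the preservation of the leading coefficient at each inductive step, which is precisely what guarantees $\det C = 1$.
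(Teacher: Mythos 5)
Your proposal is correct and takes essentially the same route as the paper: the paper likewise recasts \eqref{Prop-1} as $E^3\psi=(E^1+\alpha_n)\psi$ and $E^3\psi=(E^2+\beta_m)\psi$ and expands powers of one shift operator as the corresponding power of the other plus a scalar (ordered products of shifted $\alpha$'s, independent of $l$) combination of strictly lower powers, so the Casorati matrices differ by a unit-triangular transition and the determinants coincide. The only cosmetic difference is the direction and packaging — the paper writes the closed-form expansion of $(E^3)^k$ in powers of $E^1$ (its formulae \eqref{Prop-5}), while you expand $(E^1)^j$ in powers of $E^3$ by induction, tracking only the unit leading coefficient; these unit-triangular transitions are inverses of one another, and your explicit handling of the noncommutativity $E^1(\alpha_n\,\cdot)=\alpha_{n+1}E^1(\cdot)$ matches the shifted products $\alpha_{n+l_i}$ appearing in the paper's formula.
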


\begin{proof}
By the definition of $E^{\nu}$ in \eqref{C-gen0}£¬the relations
\eqref{Prop-1} can be rewritten as
\begin{equation}\label{Prop-3}
E^3\psi=(E^1+\alpha_n)\psi, ~~E^3\psi=(E^2+\beta_m)\psi,
\end{equation}
from which one has
\begin{subequations}\label{Prop-5}
\begin{eqnarray}
(E^3)^k=(E^1+\alpha_n)^k=(E^1)^k+\sum_{j=1}^k\sum_{\substack{l_j=0 \\
l_j\leq
l_{j+1}}}^{k-j}\prod_{i=1}^{j}\alpha_{n+l_i}(E^1)^{k-j},~~k=1,2,\cdots,N-1,\\
\label{Prop-1-5}
(E^3)^k=(E^2+\beta_m)^k=(E^2)^k+\sum_{j=1}^k\sum_{\substack{l_j=0 \\
l_j\leq
l_{j+1}}}^{k-j}\prod_{i=1}^{j}\beta_{m+l_i}(E^2)^{k-j},~~k=1,2,\cdots,N-1
.\label{prop-2-5}
\end{eqnarray}
\end{subequations}
Then, substituting them into \eqref{C-gen1} one can easily obtain
\eqref{cas-iden}.
\end{proof}
This Proposition will bring more flexibility for Casoratian
verifications. Besides, for later convenience, we give the following
Laplace expansion property\cite{Freeman-Nimmo-KP}:
\begin{prop}\label{P:2.2}
Suppose that $\mathbf{B}$ is a $N\times(N-2)$ matrix, and
\textbf{a},\textbf{b},\textbf{c},\textbf{d} are $N$th-order column
vectors, then
\begin{equation}\label{}
    |\mathbf{B},\mathbf{a},\mathbf{b}||\mathbf{B},\mathbf{c},\mathbf{d}|
    -|\mathbf{B},\mathbf{a},\mathbf{c}||\mathbf{B},\mathbf{b},\mathbf{d}|+|\mathbf{B},\mathbf{a},\mathbf{d}||\mathbf{B},\mathbf{b},\mathbf{c}|=0.
\end{equation}
\end{prop}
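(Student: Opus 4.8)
The statement is the three-term Plücker (or Jacobi/Laplace) bilinear identity for determinants, in which the six $N\times N$ Casoratians all share the common block $\mathbf{B}$ and differ only by which ordered pair from $\{\mathbf{a},\mathbf{b},\mathbf{c},\mathbf{d}\}$ is appended. The plan is to realize the whole alternating sum as the generalized Laplace (column) expansion of a single auxiliary $2N\times 2N$ determinant that is forced to vanish, so that the identity reads off as $\det\mathcal{M}=0$. Concretely, I would introduce
\[
\mathcal{M}=\begin{pmatrix}
\mathbf{B} & \mathbf{a} & \mathbf{b} & \mathbf{c} & \mathbf{d} & \mathbf{0}\\
\mathbf{0} & \mathbf{a} & \mathbf{b} & \mathbf{c} & \mathbf{d} & \mathbf{B}
\end{pmatrix},
\]
a $2N\times 2N$ matrix whose first $N$ columns are $\bigl(\begin{smallmatrix}\mathbf{B}\\ \mathbf{0}\end{smallmatrix}\bigr),\bigl(\begin{smallmatrix}\mathbf{a}\\ \mathbf{a}\end{smallmatrix}\bigr),\bigl(\begin{smallmatrix}\mathbf{b}\\ \mathbf{b}\end{smallmatrix}\bigr)$ and whose last $N$ columns are $\bigl(\begin{smallmatrix}\mathbf{c}\\ \mathbf{c}\end{smallmatrix}\bigr),\bigl(\begin{smallmatrix}\mathbf{d}\\ \mathbf{d}\end{smallmatrix}\bigr),\bigl(\begin{smallmatrix}\mathbf{0}\\ \mathbf{B}\end{smallmatrix}\bigr)$, the four middle columns being ``doubled.''

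The proof then rests on two pillars. First, $\det\mathcal{M}=0$: applying the linear map $\bigl(\begin{smallmatrix}\mathbf{u}\\ \mathbf{v}\end{smallmatrix}\bigr)\mapsto \mathbf{u}-\mathbf{v}$ sends every doubled column to $\mathbf{0}$ and sends the two $\mathbf{B}$-blocks into the span of the columns of $\mathbf{B}$, so the images of all $2N$ columns span a space of dimension at most $N-2$; since the kernel $\{(\mathbf{v};\mathbf{v})\}$ of this map is $N$-dimensional, the $2N$ columns occupy a subspace of dimension at most $(N-2)+N=2N-2<2N$ and are therefore linearly dependent. Second, expanding $\det\mathcal{M}$ by the generalized Laplace rule along the first $N$ columns produces a sum of products of an $N\times N$ minor built from $\{\mathbf{B},\mathbf{a},\mathbf{b}\}$ with the complementary minor built from $\{\mathbf{c},\mathbf{d},\mathbf{B}\}$; the zero blocks kill every row selection except the three that reconstitute a full copy of $\mathbf{B}$ on each side, leaving precisely the three products $|\mathbf{B},\mathbf{a},\mathbf{b}||\mathbf{B},\mathbf{c},\mathbf{d}|$, $|\mathbf{B},\mathbf{a},\mathbf{c}||\mathbf{B},\mathbf{b},\mathbf{d}|$, $|\mathbf{B},\mathbf{a},\mathbf{d}||\mathbf{B},\mathbf{b},\mathbf{c}|$. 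Setting this expansion equal to the zero from the first step yields the claim.

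The main obstacle is the sign bookkeeping: I would need to verify that the Laplace signs attached to the three surviving row selections come out as $+,-,+$ rather than some other pattern, which amounts to tracking the permutation parities of the chosen row sets against the fixed column split. This is routine but the one place an error can hide, so I would pin it down by checking the base case $N=2$ (where $\mathbf{B}$ is empty and $\mathcal{M}$ has two pairs of equal rows) and then arguing that inserting the $\mathbf{B}$-blocks shifts all three minors by the same parity. As an alternative that sidesteps the signs entirely, one may observe that $\omega(\mathbf{x},\mathbf{y}):=|\mathbf{B},\mathbf{x},\mathbf{y}|$ is an alternating bilinear form that descends to the quotient $\mathbb{F}^N/\mathrm{col}(\mathbf{B})$; when $\mathrm{rank}\,\mathbf{B}=N-2$ this quotient is two-dimensional and the identity is the standard $2\times 2$ Grassmann--Plücker relation for four vectors in a plane, while if $\mathrm{rank}\,\mathbf{B}<N-2$ all six Casoratians vanish and the identity is trivial.
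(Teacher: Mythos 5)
Your step 1 is sound, but step 2 has a genuine row/column confusion that breaks the main argument. A generalized Laplace expansion of $\det\mathcal{M}$ \emph{along the first $N$ columns} keeps the column split fixed and sums over row subsets $R$ with $|R|=N$: every term is a minor drawn from the columns $(\mathbf{B};\mathbf{0}),(\mathbf{a};\mathbf{a}),(\mathbf{b};\mathbf{b})$ times a complementary minor drawn from $(\mathbf{c};\mathbf{c}),(\mathbf{d};\mathbf{d}),(\mathbf{0};\mathbf{B})$. Such an expansion can therefore only ever produce the single pairing $\{\mathbf{a},\mathbf{b}\}$ versus $\{\mathbf{c},\mathbf{d}\}$; the terms $|\mathbf{B},\mathbf{a},\mathbf{c}||\mathbf{B},\mathbf{b},\mathbf{d}|$ and $|\mathbf{B},\mathbf{a},\mathbf{d}||\mathbf{B},\mathbf{b},\mathbf{c}|$ cannot appear in it at all. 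Worse, the zero blocks do not reduce the sum to pure terms: any $R$ consisting of $N-k$ top rows and $k\le 2$ bottom rows survives (a selected bottom row enters the first minor through the $\mathbf{a},\mathbf{b}$ columns, an omitted top row enters the complementary minor through the $\mathbf{c},\mathbf{d}$ columns), and for $N\ge 3$ these mixed products are generically nonzero — e.g.\ for $N=3$, $\mathbf{B}=\beta$, $R=\{1,2,4\}$ gives the factor $\beta_1(a_2b_1-a_1b_2)\ne 0$ in general. The repair is to expand with respect to the \emph{top $N$ rows}, summing over column subsets $C$ with $|C|=N$: the top zeros of $(\mathbf{0};\mathbf{B})$ force $C$ to avoid the right block, and the bottom zeros of $(\mathbf{B};\mathbf{0})$ force $C^c$ to avoid the left block, so $C$ must be the full left $\mathbf{B}$ block plus exactly two of $\mathbf{a},\mathbf{b},\mathbf{c},\mathbf{d}$. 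This yields six terms, each of the three products occurring twice; tracking the Laplace signs (the parity of the two chosen column indices among $N-1,N,N+1,N+2$) shows the paired terms carry equal signs, so $\det\mathcal{M}=\pm 2\bigl(|\mathbf{B},\mathbf{a},\mathbf{b}||\mathbf{B},\mathbf{c},\mathbf{d}|-|\mathbf{B},\mathbf{a},\mathbf{c}||\mathbf{B},\mathbf{b},\mathbf{d}|+|\mathbf{B},\mathbf{a},\mathbf{d}||\mathbf{B},\mathbf{b},\mathbf{c}|\bigr)$, and your step 1 finishes the proof.

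By contrast, your closing alternative is complete and correct as it stands, and I would promote it to the main proof: $\omega(\mathbf{x},\mathbf{y})=|\mathbf{B},\mathbf{x},\mathbf{y}|$ is unchanged by adding columns of $\mathbf{B}$ to either argument, hence descends to the quotient by $\mathrm{col}(\mathbf{B})$; when $\mathrm{rank}\,\mathbf{B}=N-2$ that quotient is two-dimensional, where the three-term alternating sum is, up to a factor, the image of $\mathbf{a}\wedge\mathbf{b}\wedge\mathbf{c}\wedge\mathbf{d}$ in the fourth exterior power of a two-dimensional space, hence zero; and when $\mathrm{rank}\,\mathbf{B}<N-2$ all six determinants vanish. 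For comparison, the paper offers no proof of Proposition \ref{P:2.2} at all — it simply quotes the Laplace expansion property from the Freeman--Nimmo reference — and the standard derivation recorded there is exactly your doubled matrix $\mathcal{M}$, but expanded along the top $N$ rows as described above.
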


\section{Bilinearization and Casoratian solutions}

In the following we derive bilinear forms and Casoratian solutions
for the non-autonomous H1, H2, H3$_\delta$ and Q1$_\delta$ models in
the non-autonomous ABS list \eqref{H/Q-list}. Singularity
confinement might provide a possible transformation to connect
discrete integrable systems with their bilinear forms, (see
\cite{sc-bil-1,sc-bil-2} as examples), but here we will roughly use
the same transformations as for the autonomous lattice
equations\cite{HZ-PartII}. It then turns out that these
non-autonomous lattice equations can share the same bilinear forms
with those autonomous ones except changing the spacing parameters
accordingly.

To get Casoratian solutions one needs to use deformed discrete
exponential functions and develop corresponding Caosratian shift
formulae, which we have listed in Appendix.

\subsection{Non-autonomous H1 equation}

We note that the non-autonomous H1 has been solved in
\cite{Kajiwara-Ohta-1} through bilinear approach and the Casoratian
solutions were given, but here we give a more generalized result.

With the parametrization
\begin{equation}\label{Parame}
    p_n=c-a_n^2,~~q_m=c-b_m^2,~~(c~\mathrm{is~ an~arbitrary~ constant}),
\end{equation}
and through the transformation
\begin{equation}
\label{H1-Trans}
    u_{n,m}=\frac{g_{n,m}}{f_{n,m}}-\sum^{n-1}_{i=n_{_0}}a_i-\sum^{m-1}_{j=m_{_0}}b_j-\gamma,~~
    (\gamma~\mathrm{ is~ an~arbitrary~ constant}),
\end{equation}
the non-autonomous H1 \eqref{H1} is bilinearized by
\begin{subequations}\label{Bili-H1}
\begin{align}
&\mathcal{H}_1\equiv(\h g\t f-\t g \h f)+(a_n-b_m)(\h f \t f-f\h{\t f})=0,\label{Bili-H1-1}\\
&\mathcal{H}_2\equiv(g \h{\t f}-\h{\t g}f)+(a_n+b_m)(f\h{\t f}-\h
f\t f)=0,\label{Bili-H1-2}
\end{align}
\end{subequations}
where in the transformation \eqref{H1-Trans}  $n_0, m_0$ are
arbitrary integers.
The connection between \eqref{H1} and \eqref{Bili-H1} is
\[
-\bigl[\mathcal{H}_1+(a_n-b_m)f\th f\bigr]
\bigl[\mathcal{H}_2+(a_n+b_m)\h f\t f\bigr]/(f\h f\t f \th f)
+(a_n^2-b_m^2) \equiv \mathrm{H1},
\]
which is the same relation as the autonomous one\cite{HZ-PartII}.

Solutions to the bilinear equations \eqref{Bili-H1} can be given by
\begin{prop}
\label{P:3.1-H1}
The Casoratians
\begin{equation}
f(\psi)=|\h{N-1}|_{_{[3]}},~~g(\psi)=|\h{N-2},N|_{_{[3]}},
\label{fg-H1}
\end{equation}
solve the non-autonomous bilinear equations \eqref{Bili-H1}, if the
column vector $\psi(n,m,l)$ symmetrically\footnote{ Here the
symmetric property between pairs $(n, a_n)$ and $(m,b_m)$ means, for
example, once we have \eqref{H1-ditui}, at the same time we have
\[
b_{m-1}\dh\psi=\psi-\overline{\dh\psi},
\]
and
\[\psi=A_{[n]}\omega,~~a_{n}\t \omega=\omega+\t{\b\omega}.
\]
}, in terms of the pairs $(n, a_n)$ and $(m,b_m)$, satisfies the
shift relations
\begin{subequations}\label{H1-ditui}
\begin{align}
&a_{n-1}\dt\psi=\psi-\overline{\dt\psi},\label{H1-ditui-1}\\
&\psi=A_{[m]}\phi,~~b_{m}\h \phi=\phi+\h{\b\phi},\label{H1-ditui-2}
\end{align}
\end{subequations}
\noindent where $\phi(n,m,l)$ is an auxiliary vector, the $N\times
N$ transform matrix $A_{[m]}$ is invertible, and the subscript $[m]$
specially means $A_{[m]}$ only depends on $m$ but is independent of
$(n,l)$.
\end{prop}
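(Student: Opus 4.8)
The plan is to verify \eqref{Bili-H1-1}--\eqref{Bili-H1-2} by the Freeman--Nimmo method, reducing each bilinear expression to the three-term Laplace (Pl\"ucker) identity of Proposition~\ref{P:2.2}. The starting point is to turn the recurrences \eqref{H1-ditui} into dispersion relations for the columns of $f=|\h{N-1}|_{_{[3]}}$ and $g=|\h{N-2},N|_{_{[3]}}$. Shifting \eqref{H1-ditui-1} up one step in $n$ gives $\b\psi=\t\psi-a_n\psi$; the symmetric relation recorded in the footnote, shifted up one step in $m$, gives $\b\psi=\h\psi-b_m\psi$. The auxiliary vector $\phi$ and the invertible matrix $A_{[m]}$ in \eqref{H1-ditui-2} serve to realise this $m$-dispersion for explicit solutions: since $A_{[m]}$ is free of $(n,l)$ and commutes with $E^3$, one has $f(\psi)=\det A_{[m]}\,f(\phi)$, and likewise for $g$. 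Subtracting the two dispersions yields $\t\psi-\h\psi=(a_n-b_m)\psi$, which is the source of the factors $a_n\mp b_m$ occurring in \eqref{Bili-H1}.

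Using these two relations I would next write down the shift formulae. Because $a_n$ is inert under the hat- and bar-shifts while $b_m$ is inert under the tilde- and bar-shifts, a single tilde- or hat-shift acts on the $l$-column $[j]:=(E^3)^j\psi$ by $\t{[j]}=[j+1]+a_n[j]$ and $\h{[j]}=[j+1]+b_m[j]$. Iterating, the double shift gives $\th{[j]}=[j+2]+(a_n+b_m)[j+1]+a_nb_m[j]$. Feeding these column rules into the determinants and clearing the linear combinations by column operations produces explicit expressions for $\t f,\h f,\th f$ and $\t g,\h g,\th g$ as Casoratians on shifted index windows, with coefficients polynomial in $a_n$ and $b_m$; these are exactly the non-autonomous shift formulae collected in the Appendix.

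The third step is to substitute all eight Casoratians into \eqref{Bili-H1-1} and \eqref{Bili-H1-2}. Every term of $\mathcal{H}_1$ and of $\mathcal{H}_2$ is a product of one hat-shifted and one non-hat-shifted factor, so the accompanying determinant factors $\det A_{[m]}$ and $\det A_{[m+1]}$ occur with the same multiplicity in each term and may be divided out. What remains should collapse, after the polynomial coefficients in $a_n,b_m$ are matched against the $a_n\mp b_m$ prefactors, into a single combination of the form $|\mathbf{B},\mathbf{a},\mathbf{b}||\mathbf{B},\mathbf{c},\mathbf{d}|-|\mathbf{B},\mathbf{a},\mathbf{c}||\mathbf{B},\mathbf{b},\mathbf{d}|+|\mathbf{B},\mathbf{a},\mathbf{d}||\mathbf{B},\mathbf{b},\mathbf{c}|$, with $\mathbf{B}$ the block of the $N-2$ columns $[0],\dots,[N-3]$ common to $f$ and $g$, and $\mathbf{a},\mathbf{b},\mathbf{c},\mathbf{d}$ chosen among $[N-2],[N-1],[N]$ and one of their shifts. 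This vanishes identically by Proposition~\ref{P:2.2}, establishing both equations.

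I expect the main obstacle to be this third step. Once the formulae for $\th f$ and $\th g$ carry the three-term coefficient pattern $1,\,a_n+b_m,\,a_nb_m$, a careful reorganisation is needed to see that $\mathcal{H}_1$ and $\mathcal{H}_2$ each telescope to exactly one Pl\"ucker triple rather than a larger non-vanishing sum. The genuinely new feature relative to the autonomous treatment of \cite{HZ-PartII} is that the coefficients are now $n$- and $m$-dependent; the computation only goes through because $a_n$ and $b_m$ are constant along the shift acting on the opposite variable and along the $l$-shift, so that they pass through the determinants as scalars. Verifying this inertness at every step, together with the $\det A_{[m]}$ bookkeeping, is the delicate part; the Pl\"ucker reduction itself is then formally identical to the autonomous case.
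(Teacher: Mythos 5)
Your step 3 is where the argument breaks, and the cause is your step 1 decision to demote \eqref{H1-ditui-2} to mere ``bookkeeping for explicit solutions.'' With only the up-shift dispersions $\t{[j]}=[j+1]+a_n[j]$ and $\h{[j]}=[j+1]+b_m[j]$ (where $[j]:=(E^3)^j\psi$), the shifted Casoratians are \emph{not} single determinants on shifted index windows: expanding $\t f=|[1]+a_n[0],\,[2]+a_n[1],\dots,[N]+a_n[N-1]|$ by multilinearity, the only nonvanishing column selections are the $N+1$ ``cut'' configurations, so $\t f=\sum_{k=0}^{N}a_n^{k}\,|0,\dots,k-1,k+1,\dots,N|_{_{[3]}}$, and similarly for $\h f,\t g,\h g$, with still longer sums for $\th f,\th g$. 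Substituting these into $\mathcal H_1,\mathcal H_2$ produces a large quadratic expression in many distinct Casoratians, and you offer no mechanism by which it telescopes to a single three-term Laplace identity; indeed it does not, which is why you correctly flag step 3 as the obstacle but cannot discharge it on this route. Single-Casoratian shift formulae exist in exactly two flavours: down-shift formulae obtained column-by-column from \eqref{H1-ditui-1} and its $m$-symmetric counterpart (e.g.\ $a_{n-1}^{N-1}\dt f=|\h{N-2},\dt\psi(N-1)|_{_{[3]}}$, i.e.\ \eqref{1-1} with $c=0$), and up-hat formulae such as \eqref{1-4}, \eqref{6-19}, \eqref{3-2}, \eqref{2-3}, which require precisely the auxiliary relation \eqref{H1-ditui-2}: from $b_m\h\phi=\phi+\h{\b\phi}$ one gets $b_m\h\psi=A_{[m+1]}A_{[m]}^{-1}\psi+\h{\b\psi}$ (cf.\ \eqref{Appendix-2-4}), whence $b_m^{N-1}\h f=|A_{[m+1]}A_{[m]}^{-1}|\cdot|\h{N-2},\c E^2\psi(N-1)|_{_{[3]}}$ --- one determinant, with the $\c E^2$ column and the determinant prefactor you never account for. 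This relation is \emph{not} derivable from $\b\psi=\h\psi-b_m\psi$ alone: testing on $\psi=\psi^+$ of \eqref{psi1}, one finds $\psi+\h{\b\psi}=(1+k b_m+k^2)\psi\neq b_m\h\psi=b_m(b_m+k)\psi$ in general, so \eqref{H1-ditui-2} is a genuinely independent hypothesis, and your claim that $f(\psi)=\det A_{[m]}\,f(\phi)$ captures its role is a misreading.

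The paper's proof shows the repair. Since the bilinear equations must hold at every $(n,m)$, it verifies the down-shifted forms $\mathcal{\dth H}_1$ and $\mathcal{\dt H}_2$, chosen so that every factor is either a pure down-shift --- handled by \eqref{H1-ditui-1} and its symmetric counterpart via \eqref{2-1}, \eqref{6-2}, \eqref{1-2}, and, crucially, the compact \emph{combinations} $\dt g+a_{n-1}\dt f$ and $\dh g+b_{m-1}\dh f$ of \eqref{3-1}, \eqref{6-15}, rather than $\dt g$ and $\dh g$ separately --- or a single up-hat shift handled through \eqref{H1-ditui-2} ($\h f$, $\h g-b_m\h f$, $\h{\dt f}$ via \eqref{6-19}, \eqref{3-2}, \eqref{2-3}). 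Each of $\mathcal{\dth H}_1$ and $\mathcal{\dt H}_2$ then becomes, after pulling out scalar powers of $a_{n-1},b_{m-1},b_m$ and one factor $|A_{[m+1]}A_{[m]}^{-1}|$, exactly one instance of Proposition \ref{P:2.2} with $\mathbf{B}=(\h{N-3})$ and columns drawn from $\psi(N-2),\psi(N-1),\dt\psi(N-2),\dh\psi(N-2),\c E^2\psi(N-2)$. Your closing intuition --- that $a_n,b_m$ pass through determinants as scalars and the Pl\"ucker reduction mirrors the autonomous case --- is sound, but it only becomes a proof once you adopt the down-shifted equations and retain \eqref{H1-ditui-2} as the source of the hat-direction formulae.
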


\begin{proof}

We prove $\mathcal{H}_1 $ in its down-tilde-hat
version $\mathcal{\dth H}_1$:
\begin{equation}\label{dth-Bili-H1-1}
    \mathcal{\dth H}_1\equiv(\dt g\dh f-\dh g \dt f)+(a_{n-1}-b_{m-1})(\dt f \dh f-\dth f
    f).
\end{equation}
Using the formulae given in appendix A with $c=0$. In
\eqref{dth-Bili-H1-1} $f=|\widehat{N-1}|_{_{[3]}}$, $\dth f$, $\dh
f$, $\dt g+a_{n-1}\dt f$, $\dt f$ and $\dh g+ b_{m-1}\dh f$ are
\eqref{2-1}, \eqref{6-2}, \eqref{3-1}, \eqref{1-2} and \eqref{6-15},
respectively, we have
\[
\begin{array}{rl}
\mathcal{\dth H}_1\equiv&-(a_{n-1}-b_{m-1})\dth f
    f+\dh f(\dt g+a_{n-1}\dt f)-\dt f(\dh g+b_{m-1}\dh f)\\
    =&-a_{n-1}^{-N+2}~b_{m-1}^{-N+2}~[|\h{N-3},~\psi(N-2),~\psi(N-1)|_{_{[3]}}\cdot
    |\h{N-3},~\dh\psi(N-2),~\dt\psi(N-2)|_{_{[3]}}\\
    &-|\h{N-3},~\psi(N-2),~\dh\psi(N-2)|_{_{[3]}}\cdot
    |\h{N-3},~\psi(N-1),~\dt\psi(N-2)|_{_{[3]}}\\
    &+|\h{N-3},~\psi(N-2),~\dt\psi(N-2)|_{_{[3]}}\cdot
    |\h{N-3},~\psi(N-1),~\dh\psi(N-2)|_{_{[3]}}]\\
    =&0,
\end{array}
\]
where we have made use of Proposition \ref{P:2.2} in which
$\textbf{B}=(\h {N-3})$, $(\textbf{a}, \textbf{b},\textbf{c},
\textbf{d})=(\psi(N-2), \psi(N-1), \dh\psi(N-2), \dt\psi(N-2))$.

Next, we prove the down-tilde version of $\mathcal{H}_2$, which is
\begin{equation}\label{dt-Bili-H1-2}
    \mathcal{\dt H}_2\equiv(\h f\dt g-\h g\dt f)+(a_{n-1}+b_{m})(\h f
    \dt f-f\h{\dt f}).
\end{equation}
In \eqref{dt-Bili-H1-2} we take $f=|\widehat{N-1}|_{_{[3]}}$, and for
$\h{\dt f}$, $\dt f$, $\h g- b_{m}\h f$, $\h f$ and $\dt
g+a_{n-1}\dt f$ we use \eqref{2-3}, \eqref{1-2}, \eqref{3-2},
\eqref{6-19} and \eqref{3-1}, with $c=0$, respectively. Then we have
\begin{equation*}
\begin{array}{rl}
\mathcal{\dt H}_2\equiv&-(a_{n-1}+b_{m})f\h{\dt f}-\dt f(\h g-b_{m}\h f)+\h f(\dt g+a_{n-1}\dt f)\\
    =&-a_{n-1}^{-N+2}~b_{m}^{-N+2}|A_{[m+1]}A_{[m]}^{-1}|\cdot[|\h{N-3}~~\psi(N-2)~~\psi(N-1)|_{_{[3]}}\\
    & ~ \times
    |\h{N-3},~\dt\psi(N-2),~\c E^2\psi(N-2)|_{_{[3]}}\\
    &-|\h{N-3},~\psi(N-2),~\dt\psi(N-2)|_{_{[3]}}\cdot
    |\h{N-3},~\psi(N-1),~\c E^2\psi(N-2)|_{_{[3]}}\\
    &+|\h{N-3},~\psi(N-2),~\c E^2\psi(N-2)|_{_{[3]}}\cdot
    |\h{N-3},~\psi(N-1),~\dt\psi(N-2)|_{_{[3]}}]\\
    =&0,
\end{array}
\end{equation*}
by using Proposition \ref{P:2.2} in which $\textbf{B}=(\h {N-3})$,
$(\textbf{a}, \textbf{b}, \textbf{c}, \textbf{d})=(\psi(N-2),
~\psi(N-1), ~\dt\psi(N-2), ~\c E^2\psi(N-2))$.
\end{proof}

For the explicit forms of $\psi$ together with the transformation matrices we can take either \eqref{psi1} with $c=0$
or \eqref{psi2} with $c=0$.


\subsection{Non-autonomous H2 equation}

By the parametrization \eqref{Parame} with $c=0$, i.e., $p_n=-a_n^2,~~q_m=-b_m^2$, we first rewrite the
non-autonomous H2 \eqref{H2} into
\begin{equation}\label{H2-1}
    \mathrm{H2}\equiv (u-\h{\t u}) (\t u-\h u)+(a_n^2-b_m^2)(u+\t u+\h u+\h{\t
    u}-(a_n^2+b_m^2))=0.
\end{equation}
Then, taking the transformation
\begin{subequations}\label{H2-Cond}
\begin{equation}\label{H2-Trans}
   u_{n,m}=U_{n,m}^2-2U_{n,m}\frac{g}{f}+\frac{h+s}{f},
\end{equation}
with
\begin{equation}\label{H2-Cond-3}
U_{n,m}=\sum^{n-1}_{i=n_{_0}}a_i+\sum^{m-1}_{j=m_{_0}}b_j+\gamma,~~~~(\gamma~\text{is
an arbitrary constant}),
\end{equation}
and
\begin{equation}\label{H2-Cond-2}
s-h=\alpha f, ~~~~(\alpha~\text{is some constant}),
\end{equation}
\end{subequations}
one can bilinearize \eqref{H2-1} by
\begin{subequations}\label{Bili-H2}
\begin{align}
&\mathcal{H}_1\equiv(\h g\t f-\t g \h f)+(a_n-b_m)(\h f \t f-f\h{\t f})=0,\label{Bili-H2-1}\\
&\mathcal{H}_2\equiv(g \h{\t f}-\h{\t g}f)+(a_n+b_m)(f\h{\t f}-\h f\t f)=0,\label{Bili-H2-2}\\
&\mathcal{H}_3\equiv-(a_n+b_m)\h f\t g+a_n\h{\t f}g+b_mf\h{\t
g}+\h{\t f}h-f\h{\t h}=0,\label{Bili-H2-3}\\
&\mathcal{H}_4\equiv-(a_n-b_m) f\h{\t g}+a_n\t f\h g-b_m\h f\t g+\t f\h h-\h f \t h=0,\label{Bili-H2-4}\\
&\mathcal{H}_5\equiv b_m(\h fg-f\h g)+f\h h+\h fs-g\h
g=0,\label{Bili-H2-5}
\end{align}
\end{subequations}
where the connection is
\[
    \mathrm{H2}=\sum_{i=1}^{5}\mathcal{H}_i P_i/(f\t f\h f\h{\t f}),
\]
and
\begin{subequations}
\begin{align*}
&P_1=-4(a_n+b_m)[(\t U\h{\t U}-a_n^2+b_m^2)\t f\h f-\h{\t U}\h f\t g-(a_n-b_m)f\h{\t g}],\\
&\nonumber P_2=-4[(a_n-b_m)(\t U\h{\t U}-a_n^2+b_m^2)\t f\h f+(\t
U\h{\t U}-a_n^2+b_m^2)\t f\h g-\t
U\h{\t U}\h f\t g -(a_n-b_m)\t Uf\h{\t g}],\\
&P_3=4[(a_n-b_m)U\t f\h f+\h U\t f\h g-\t U\h f\t g-\t f\h h+\h f\t h],\\
&P_4=4[(a_n+b_m)(\h Uf\h{\t f}-\h f\t g)+\t U(\h{\t f}g-f\h{\t g})],\\
&P_5=4(a_n^2-b_m^2)\t f\h{\t f},
\end{align*}
\end{subequations}
\noindent where $U=U_{n,m}$ is defined in \eqref{H2-Cond-3}. This is as same as the autonomous case\cite{HZ-PartII}.

For solutions we have
\begin{prop}\label{P:3.2-H2}
The Casoratians
\begin{equation}\label{Caso-H2}
    f=|\widehat{N-1}|_{_{[3]}},~g=|\widehat{N-2},~N|_{_{[3]}},
~h=|\widehat{N-3},~N-1,~N|_{_{[3]}},~s=|\widehat{N-2},~N+1|_{_{[3]}},
\end{equation}
solve the non-autonomous bilinear equations \eqref{Bili-H2}, where
the basic column vector $\psi$ satisfies the same conditions in
Proposition \ref{P:3.1-H1}.
\end{prop}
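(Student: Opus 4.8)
The plan is to verify each of the five bilinear equations \eqref{Bili-H2-1}--\eqref{Bili-H2-5} by substituting the Casoratian expressions \eqref{Caso-H2} and reducing to the Laplace expansion identity of Proposition \ref{P:2.2}, exactly in the spirit of the H1 proof above. Since $\mathcal{H}_1$ and $\mathcal{H}_2$ in \eqref{Bili-H2} are literally the same expressions as in \eqref{Bili-H1}, and the $f,g$ here coincide with the $f,g$ of Proposition \ref{P:3.1-H1}, those two equations are already established and need no new work; I would simply cite that proposition. The genuinely new content is $\mathcal{H}_3,\mathcal{H}_4,\mathcal{H}_5$, which involve the additional Casoratians $h=|\widehat{N-3},N-1,N|_{_{[3]}}$ and $s=|\widehat{N-2},N+1|_{_{[3]}}$.

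First I would, as in the H1 argument, pass to a convenient shifted version of each equation (a down-tilde-hat or down-tilde form) so that the shift relations \eqref{H1-ditui} can be applied to express every shifted Casoratian back in terms of $\psi$ and its neighbouring columns. Concretely, for each of $f,\t f,\h f,\th f$ and for $g,\t g,\h g,\th g$ and for $h,\h h$ and $s,\h s$ I would invoke the corresponding Casoratian shift formulae from Appendix A (with $c=0$), which is where the relations \eqref{H1-ditui-1}--\eqref{H1-ditui-2} get converted into explicit single-column replacements inside the determinants. The aim of this bookkeeping is to arrange each $\mathcal{H}_i$ so that, after pulling out the common scalar prefactors of the form $a^{-N+k}b^{-N+k}$ (and, where an $E^2$-shift is unavoidable, a factor $|A_{[m+1]}A_{[m]}^{-1}|$ as in the $\mathcal{\dt H}_2$ computation), what remains is a three-term alternating sum of products of two $N$th-order Casoratians sharing a common $(N-2)$-column block $\mathbf{B}=(\widehat{N-3})$ or $(\widehat{N-4})$.

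Once in that form, each equation collapses to zero by Proposition \ref{P:2.2} with an appropriate choice of the four extra columns $\mathbf{a},\mathbf{b},\mathbf{c},\mathbf{d}$; for $\mathcal{H}_3,\mathcal{H}_4$ these columns will be drawn from $\{\psi(N-2),\psi(N-1),\psi(N),\dt\psi,\c E^2\psi\}$, while $\mathcal{H}_5$, being the equation that actually couples $h$ and $s$, will require the column set that realises the combination $h+s$ appearing in the transformation \eqref{H2-Trans}. I expect the main obstacle to be precisely this last point: tracking which column indices survive so that $h$ (a determinant with a two-column ``gap'' at positions $N-2,N-1$) and $s$ (a single high column $N+1$) combine into a single Laplace triple. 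The constraint \eqref{H2-Cond-2}, $s-h=\alpha f$, is what makes the combination consistent, and I would need to check that the $\alpha f$ term drops out of $\mathcal{H}_5$ identically so that the verification is independent of the free constant $\alpha$.

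The secondary subtlety is that the symmetric roles of $(n,a_n)$ and $(m,b_m)$ mean I can choose, for each of the five equations separately, whichever shifted version minimises the number of awkward $E^2$-shifts, just as the H1 proof handled $\mathcal{H}_1$ via $\mathcal{\dth H}_1$ but $\mathcal{H}_2$ via $\mathcal{\dt H}_2$. Hence the overall structure is routine but bookkeeping-heavy, and the only real care is in matching each shifted Casoratian to the correct Appendix A formula and then recognising the resulting expression as an instance of the Plücker/Laplace identity in Proposition \ref{P:2.2}.
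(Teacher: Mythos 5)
Your proposal follows the paper's proof essentially verbatim: $\mathcal{H}_1,\mathcal{H}_2$ are disposed of by citing the H1 result of Proposition \ref{P:3.1-H1}, and the remaining equations are verified in shifted form ($\mathcal{\dt H}_3$, $\dth{\mathcal{H}}_4$, and unshifted $\mathcal{H}_5$) by substituting the Appendix A shift formulae with $c=0$, extracting the scalar prefactors and $|A_{[m+1]}A_{[m]}^{-1}|$, and invoking Proposition \ref{P:2.2} with $\mathbf{B}=(\h{N-3})$ and column quadruples drawn exactly from the set you name. One clarification: the constraint $s-h=\alpha f$ of \eqref{H2-Cond-2} plays no role in this verification ($\alpha$ never appears in \eqref{Bili-H2}); the paper handles $\mathcal{H}_5$ directly by grouping it as $f(\h h-b_m\h g)-g(\h g-b_m\h f)+\h f s$ and applying the Laplace identity with columns $(\psi(N-2),\psi(N-1),\psi(N),\c E^2\psi(N-2))$, so your anticipated obstacle there does not arise.
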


\begin{proof}

We skip the proof for $\mathcal{H}_1$ and $\mathcal{H}_2$ as they
are just the bilinear H1. We prove $\mathcal{H}_5$ and shifted
$\mathcal{H}_3$ and $\mathcal{H}_4$ in the following forms
\begin{subequations}\label{H2-shift}
\begin{align}
&\mathcal{\dt H}_3\equiv-(a_{n-1}+b_m)\dt{\h f}g+a_{n-1}\h f\dt
g+b_m\dt f\h g+\h f\dt h-\dt f\h
    h,\label{H2-shift-3}\\
&\dth{\mathcal{H}}_4\equiv-(a_{n-1}-b_{m-1})\dth f g+\dh
f(a_{n-1}\dt g+\dt
    h)-\dt f(b_{m-1}\dh g+\dh h).\label{H2-shift-4}
\end{align}
\end{subequations}
We still use the formulae given in appendix A with $c=0$.

For \eqref{H2-shift-3}, $g=|\widehat{N-2},~N|_{_{[3]}}$, and $\h{\dt
f}$, $\dt f$, $b_m\h g- \h h$, $\h f$ and $a_{n-1}\dt g+\dt h$ are
\eqref{2-3}, \eqref{1-2}, \eqref{4-2}, \eqref{6-19} and \eqref{4-1}
respectively. Then we have
\begin{equation*}
\begin{array}{rl}
\mathcal{\dt H}_3\equiv&-(a_{n-1}+b_m)g\h{\dt f}+\dt f(b_m\h g-\h h)+\h f(a_{n-1}\dt g+\dt h)\\
    =&-a_{n-1}^{-N+2}~b_m^{-N+2}|A_{[m+1]}A_{[m]}^{-1}|\cdot[|\h{N-3},\psi(N-2),\psi(N)|_{_{[3]}}\\
    & ~ \times |\h{N-3},\dt\psi(N-2),\c E^2\psi(N-2)|_{_{[3]}}\\
    &-|\h{N-3},~\psi(N-2),~\dt\psi(N-2)|_{_{[3]}}\cdot
    |\h{N-3},~\psi(N),~\c E^2\psi(N-2)|_{_{[3]}}\\
    &+|\h{N-3},~\psi(N-2),~\c E^2\psi(N-2)|_{_{[3]}}\cdot
    |\h{N-3},~\psi(N),~\dt\psi(N-2)|_{_{[3]}}]\\
    =&0,
\end{array}
\end{equation*}
where we have made use of Proposition \ref{P:2.2} in which
$\textbf{B}=(\h {N-3})$,~ $(\textbf{a}, ~\textbf{b}, ~\textbf{c},
~\textbf{d})=(\psi(N-2), ~\psi(N), ~\dt\psi(N-2), ~\c
E^2\psi(N-2))$.

For \eqref{H2-shift-4}, $g=|\widehat{N-2},~N|_{_{[3]}}$, and $\dth
f$, $\dh f$, $a_{n-1}\dt g+\dt h$, $\dt f$ and $b_{m-1}\dh g+\dh h$
are \eqref{1-1}, \eqref{6-2},  \eqref{4-1}, \eqref{1-2} and
\eqref{6-25} respectively. Then we have
\begin{equation*}
\begin{array}{rl}
\mathcal{\dt H}_4\equiv&-(a_{n-1}-b_{m-1})\dth f g+\dh f(a_{n-1}\dt g+\dt h)-\dt f(b_{m-1}\dh g+\dh h)\\
    =&-a_{n-1}^{-N+2}~b_{m-1}^{-N+2}[|\h{N-3},~\psi(N-2),~\psi(N)|_{_{[3]}}\cdot
    |\h{N-3},~\dh\psi(N-2),~\dt\psi(N-2)|_{_{[3]}}\\
    &-|\h{N-3},~\psi(N-2),~\dh\psi(N-2)|_{_{[3]}}\cdot
    |\h{N-3},~\psi(N),~\dt\psi(N-2)|_{_{[3]}}\\
    &+|\h{N-3},~\psi(N-2),~\dt\psi(N-2)|_{_{[3]}}\cdot
    |\h{N-3},~\psi(N),~\dh\psi(N-2)|_{_{[3]}}]\\
    =&0,
\end{array}
\end{equation*}
where we have made use of Proposition \ref{P:2.2} in which
$\textbf{B}=(\h {N-3})$,~ $(\textbf{a}, ~\textbf{b}, ~\textbf{c},
~\textbf{d})=(\psi(N-2), ~\psi(N), ~\dh\psi(N-2), ~\dt\psi(N-2))$.

For \eqref{Bili-H2-5}, we have $f=|\h{N-1}|_{_{[3]}},~
g=|\h{N-2},N|_{_{[3]}},~s=|\h{N-3},N-1,N|_{_{[3]}}$, and $\h h-b_m\h
g$, $\h g - b_m \h f$ and $\h f$ are provided by \eqref{4-2},
\eqref{3-2} and \eqref{6-19}. Now we obtain
\begin{equation*}
\begin{array}{rl}
\mathcal{H}_5\equiv &f(\h h-b_m \h g)-g(\h g-b_m\h f)+\h f s\\
    =&b_m^{-N+2}
|A_{[m+1]}A_{[m]}^{-1}|\cdot[|\h{N-3},\psi(N-2),\psi(N-1)|_{_{[3]}}\cdot
    |\h{N-3},\psi(N),\c E^2\psi(N-2)|_{_{[3]}}\\
    &-|\h{N-3},~\psi(N-2),~\psi(N)|_{_{[3]}}\cdot
    |\h{N-3},~\psi(N-1),~\c E^2\psi(N-2)|_{_{[3]}}\\
    &+|\h{N-3},~\psi(N-2),~\c E^2\psi(N-2)|_{_{[3]}}\cdot
    |\h{N-3},~\psi(N-1),~\psi(N)|_{_{[3]}}]\\
    =&0,
\end{array}
\end{equation*}
where we have made use of Proposition \ref{P:2.2} in which
$\textbf{B}=(\h {N-3})$,~ $(\textbf{a}, ~\textbf{b}, ~\textbf{c},
~\textbf{d})=(\psi(N-2), ~\psi(N-1), ~\psi(N), ~\c E^2\psi(N-2))$.
\end{proof}

For the explicit forms of $\psi$ together with the transformation matrices we can take either \eqref{psi1} with $c=0$
or \eqref{psi2} with $c=0$.


\subsection{Non-autonomous H3  equation}

With the parametrization
\begin{equation*}
p_n=\frac{1+\alpha_{n}^2}{2\alpha_n},~~q_m=\frac{1+\beta_m^2}{2\beta_m},
~~\alpha_n^2=-\frac{a_n-c}{a_n+c},~~\beta_m^2=-\frac{b_m-c}{b_m+c},
\end{equation*}
the non-autonomous H3 equation \eqref{H3} admits two different sets
of bilinear forms. One is
\begin{subequations}\label{Bili-H3-1}
\begin{align}
&\mathcal{B}_1\equiv2c f \t f+(a_n-c)\t {\b f} \db f-(a_n+c)\b f \t {\db f}=0,\label{Bili-H3-1-1}\\
&\mathcal{B}_2\equiv2c f \h f+(b_m-c)\h {\b f} \db f-(b_m+c)\b f
\h{\db f}=0,\label{Bili-H3-1-2}
\end{align}
\end{subequations}
and the other is
\begin{subequations}\label{Bili-H3-2}
\begin{align}
&\mathcal{B}_1'\equiv(b_m+c)\b f\h{\t f}+(a_n-c)\h{\t {\b f}}f-(a_n+b_m)\t f\h {\b f}=0,\label{Bili-H3-2-1}\\
&\mathcal{B}_2'\equiv(a_n+c)f\db{\th f}+(b_m-c)\th f\db f-(a_n+b_m)\t f\h {\db f}=0,\label{Bili-H3-2-2}\\
&\mathcal{B}_3'\equiv(a_n-c)(b_m+c)\h {\db f}\t {\b
f}-(b_m-c)(a_n+c)\t {\db f}\h {\b f}-2c(a_n-b_m)f\h{\t
f}=0.\label{Bili-H3-2-3}
\end{align}
\end{subequations}
Both of them share same transformation
\begin{subequations}\label{H3-Trans}
\begin{equation}
    u_{n,m}=A~ V_{n,m}\frac{\b f_{n,m}}{f_{n,m}}+B~ V_{n,m}^{-1} \frac{\db
    f_{n,m}}{f_{n,m}},~~AB=-\frac{1}{4}\delta,
\end{equation}
where
\begin{equation}\label{H3-Cond}
V_{n,m}=\prod^{n-1}_{i=n_0}\alpha_i\prod^{m-1}_{j=m_0}\beta_j.
\end{equation}
\end{subequations}
The connections are respectively
\begin{align*}
    \mathrm{H3}=\frac{-\delta^2 B^{-2}V_{n,m}^2(a_n-c)(b_m-c)P_1+
    4\delta P_2+16B^{2}V_{n,m}^{-2}(a_n+c)(b_m+c)P_3}{32(a_n^2-c^2)(b_m^2-c^2)f\t f\h f\h{\t f}},
\end{align*}
with
\begin{align*}
&P_1=\h{\t{\b f}}[(b_m-c)\h{\b f}\mathcal{B}_1-(a_n-c)\t {\b
      f}\mathcal{B}_2]-\b f[(b_m+c)\t{\b f}\h {\mathcal{B}_1}-(a_n+c)\h {\b f}\t{\mathcal{B}_2}],\\
&P_2=2c[(b_m+c)(b_m-c)(\h f \h{\t f}\mathcal{B}_1+f \t f\h
{\mathcal{B}_1})
     -(a_n+c)(a_n-c)(\t f \h{\t f}\mathcal{B}_2+f \h f\t {\mathcal{B}_2})],\\
&P_3=\h{\t{\db f}}[(b_m+c)\h{\db f}\mathcal{B}_1-(a_n+c)\t {\db
     f}\mathcal{B}_2]-\db f[(b_m-c)\t{\db f}\h
     {\mathcal{B}_1}-(a_n-c)\h {\db f}\t{\mathcal{B}_2}],
\end{align*}
and
\begin{subequations}\label{H3-Iden2}
\begin{align}
    &\mathrm{H}3\equiv\frac{c}{f\t f\h f\h{\t f}}[A^2V_{n,m}^2\frac{\t {\b f}\h f\mathcal{B}_1'-\h {\b f}\t f \mathcal{B}_2'}{(a_n+c)(b_m+c)}
    +B^2V_{n,m}^{-2}\frac{\h {\db f}\t f \underline{\mathcal{B}}_1'-\t {\db f}\h f\underline{\mathcal{B}}_2'}{(a_n-c)(b_m-c)}~~~~~~~~~~\nonumber\\
    &~~~~~~~~+AB(\frac{\h {\db f} \t f\mathcal{B}_2'+\t {\b f}\h f\underline{\mathcal{B}}_2'}{(a_n+c)(b_m-c)}-
    \frac{\t {\db f} \h f\mathcal{B}_1'+\h {\b f}\t
    f\underline{\mathcal{B}}_1'}{(a_n-c)(b_m+c)}+
    \frac{2(a_n+b_m)\t f\h f\mathcal{B}_3'}{(a_n^2-c^2)(b_m^2-c^2)})],\nonumber
\end{align}
\end{subequations}
which are similar to the one  in the autonomous case\cite{HZ-PartII}. For
solutions we have
\begin{prop}\label{P:3.3-H3}
The Casoratians
\begin{equation}\label{C-H3}
    f=|\widehat{N-1}|_{_{[\nu]}},~~\nu=1,~2,~3,
\end{equation}
solve non-autonomous bilinear equations \eqref{Bili-H3-1} and
\eqref{Bili-H3-2}, if the basic column vector $\psi(n,m,l)$ is
symmetric in terms of pairs $(n, a_n)$ and $(m,b_m)$, and together
with auxiliary vectors $\omega(n,m,l)$ and $\zeta(n,m,l)$, satisfies
the following shift relations
\begin{subequations}\label{H3-1-ditui}
\begin{align}
&(c-a_n)\db\psi=\psi-\t{\db\psi},\label{H3-1-ditui-1}\\
&\psi=A_{[n]}\omega,~~(a_n+c)\t\omega=\omega+\bt\omega,\label{H3-1-ditui-2}\\
&\psi=B_{[l]}\zeta,~~(c+b_m)\b \zeta=
\zeta+\bh\zeta,\label{H3-1-ditui-3}
\end{align}
\end{subequations}
\noindent where $A_{[n]}$ and $B_{[l]}$ are $N\times N$ transform
matrices, and the matrix product $B_{[l]}B_{[l+1]}^{-1}$ is
independent of $l$.
\end{prop}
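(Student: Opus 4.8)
The plan is to reproduce the mechanism used in the proofs of Propositions~\ref{P:3.1-H1} and~\ref{P:3.2-H2}: for each bilinear equation (or a conveniently down-shifted copy of it) substitute the Casoratian shift formulae collected in the Appendix for every entry, factor out the common scalar prefactor, and recognise the surviving combination as one instance of the Laplace/Pl\"ucker identity of Proposition~\ref{P:2.2}, which makes it vanish identically. The only genuinely new ingredient relative to H1 and H2 is that the Casoratian $f=|\h{N-1}|_{_{[\nu]}}$ is now permitted for any $\nu=1,2,3$; I will use Proposition~\ref{P:2.1} to pass freely among $|\h{N-1}|_{_{[1]}}$, $|\h{N-1}|_{_{[2]}}$ and $|\h{N-1}|_{_{[3]}}$ so that, in each bilinear equation, the columns fed to the Pl\"ucker relation can be collected inside a single determinant block. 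As a preliminary I would check that the hypotheses \eqref{H3-1-ditui} are of exactly the type required by Proposition~\ref{P:2.1}: acting with $E^3$ on \eqref{H3-1-ditui-1} yields $\b\psi-\t\psi=(c-a_n)\psi$, while the $m$-direction relation \eqref{H3-1-ditui-3}, read through $\psi=B_{[l]}\zeta$, supplies the companion relation $\b\psi-\h\psi\propto\psi$; hence \eqref{cas-iden} holds and the three representations of $f$ indeed coincide.

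For the first pair \eqref{Bili-H3-1} I would treat $\mathcal{B}_1$ of \eqref{Bili-H3-1-1} first, since it involves shifts only in the $n$ (tilde) and $l$ (bar) directions and therefore only calls on \eqref{H3-1-ditui-1} and \eqref{H3-1-ditui-2}. Taking $f=|\h{N-1}|_{_{[3]}}$ I express $\t f$, $\t{\b f}$, $\db f$, $\b f$ and $\t{\db f}$ through the relevant Appendix formulae (with $c$ here the genuine parameter, not set to zero); the coefficients $2c$, $a_n-c$, $a_n+c$ are precisely the scalars generated by \eqref{H3-1-ditui-1}--\eqref{H3-1-ditui-2}, so that after extracting a common power of $(a_n-c)$ together with the determinant $|A_{[n+1]}A_{[n]}^{-1}|$ coming from the transform matrix in \eqref{H3-1-ditui-2}, the remainder is a three-term expression $|\mathbf{B},\mathbf{a},\mathbf{b}||\mathbf{B},\mathbf{c},\mathbf{d}|-|\mathbf{B},\mathbf{a},\mathbf{c}||\mathbf{B},\mathbf{b},\mathbf{d}|+|\mathbf{B},\mathbf{a},\mathbf{d}||\mathbf{B},\mathbf{b},\mathbf{c}|$ with $\mathbf{B}=(\h{N-3})$ and four columns built from $\psi(N-2),\psi(N-1)$ and their down-tilde and $\c E^{1}$ shifts; this vanishes by Proposition~\ref{P:2.2}. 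Equation $\mathcal{B}_2$ of \eqref{Bili-H3-1-2} then follows for free from the symmetry between the pairs $(n,a_n)$ and $(m,b_m)$, exchanging the roles of the tilde and hat shifts and using \eqref{H3-1-ditui-3} (with the symmetric counterpart of \eqref{H3-1-ditui-1}) in place of \eqref{H3-1-ditui-2}.

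The second set \eqref{Bili-H3-2} is where the full three-directional structure enters and where the flexibility of Proposition~\ref{P:2.1} becomes essential. For $\mathcal{B}_1'$ and $\mathcal{B}_2'$ of \eqref{Bili-H3-2-1}--\eqref{Bili-H3-2-2} I would again pass to a down-shifted copy so that the Appendix formulae apply cleanly, rewrite each Casoratian in whichever representation $[\nu]$ makes its columns compatible with its neighbours, extract the common prefactor (now carrying mixed products of $(a_n\pm c)$ and $(b_m\pm c)$ together with a determinant of a product of transform matrices), and close with Proposition~\ref{P:2.2}. The delicate case, and the step I expect to be the main obstacle, is $\mathcal{B}_3'$ of \eqref{Bili-H3-2-3}: it is symmetric in the $n$ and $m$ shifts and carries the product coefficients $(a_n-c)(b_m+c)$ and $(b_m-c)(a_n+c)$, so the difficulty is to choose a single block $\mathbf{B}=(\h{N-3})$ and four columns $\mathbf{a},\mathbf{b},\mathbf{c},\mathbf{d}$ for which all three products of $\mathcal{B}_3'$ share that same $\mathbf{B}$ after the $[\nu]$-conversions, and to confirm that the attendant transform-matrix determinants factor out as one common scalar rather than as $l$-dependent weights. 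This last point is exactly why the hypothesis that $B_{[l]}B_{[l+1]}^{-1}$ be independent of $l$ is imposed; granting it, each of $\mathcal{B}_1'$, $\mathcal{B}_2'$, $\mathcal{B}_3'$ collapses to a single use of Proposition~\ref{P:2.2}, completing the proof.
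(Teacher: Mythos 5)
Your overall skeleton---down-shift each bilinear equation, substitute the Appendix shift formulae, factor out a scalar, close with Proposition~\ref{P:2.2}, and get $\mathcal{B}_2$ and $\mathcal{B}_2'$ for free from the $n$-$m$ symmetry---is indeed the paper's. But your concrete plan for $\mathcal{B}_1$ of \eqref{Bili-H3-1-1}, the one step you actually work out, would fail. You keep $f=|\h{N-1}|_{_{[3]}}$ and expect a single Pl\"ucker identity with prefactor $|A_{[n+1]}A_{[n]}^{-1}|$ and extra columns carrying $\dt$ and $\c E^1$ shifts. In the $[3]$ representation, however, the $l$-shifts translate the entire column window: in $\mathcal{\dt B}_1\equiv 2c\,\dt f\,f+(a_{n-1}-c)\,\b f\,\dtb f-(a_{n-1}+c)\,\b{\dt f}\,\db f$ the first product carries no bar while the other two split as one up-bar against one down-bar, so the six determinants involve the windows $\{0,\dots,N-1\}$, $\{1,\dots,N\}$ and $\{-1,\dots,N-2\}$ simultaneously; together with the $\dt\psi$ columns this generates more than four distinct ``extra'' columns (both $\psi(-1)$ and $\psi(N)$ appear), whereas one application of Proposition~\ref{P:2.2} can absorb only four beside a common $N\times(N-2)$ block---and no such common block exists here. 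The structural point you miss is that the column direction must be the one \emph{absent} from the equation at hand: the paper proves $\mathcal{\dt B}_1$ with $f=|\h{N-1}|_{_{[2]}}$ (columns in the $m$-direction), collapsing the tilde shifts through $(a_{n-1}-b_m)\dt\psi=\psi-\dt{\h\psi}$ and the bar shifts through the $(c\pm b_m)$ relations---this is exactly where \eqref{H3-1-ditui-3} enters---using \eqref{1-10}, \eqref{6-8}, \eqref{1-9}, \eqref{2-6}, \eqref{2-7}, \eqref{1-6}, so that the Laplace identity runs over the block $(\h{N-3})$ with columns $(\psi(N-2),\dt\psi(N-2),\db\psi(N-2),\c E^3\psi(N-2))$ and prefactor $|B_{[l+1]}B_{[l]}^{-1}|$: it is $\c E^3$, not $\c E^1$, and the $B$-matrices, not $A_{[n]}$, that appear. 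Likewise $\mathcal{B}_1'$ is proved in the $[3]$ representation (where the bars occur uniformly once per product, so the window slides consistently) and $\mathcal{B}_3'$ in the $[1]$ representation with columns $(\psi(0),\db\psi(N-1),\dh\psi(N-1),\c E^3\psi(N-1))$.

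Two attributions are also off. First, the hypothesis that $B_{[l]}B_{[l+1]}^{-1}$ be independent of $l$ is invoked already in the proof of $\mathcal{\dt B}_1$---the paper says so explicitly, to extract the single coefficient $|B_{[l+1]}B_{[l]}^{-1}|$, and it underlies \eqref{6-8} via $2c\psi=\b\psi+B_{[l]}B_{[l-1]}^{-1}\db\psi$---and again in $\mathcal{B}_3'$, not only in $\mathcal{B}_3'$ as you conjecture. Second, in your preliminary check of Proposition~\ref{P:2.1}, the companion relation $\b\psi-\h\psi=(c-b_m)\psi$ does not follow from \eqref{H3-1-ditui-3}, which instead yields $(c+b_m)\b\psi=B_{[l+1]}B_{[l]}^{-1}\psi+\h{\b\psi}$ (wrong sign of $c$ and a matrix factor intervening); it is the $n$-$m$ symmetric counterpart of \eqref{H3-1-ditui-1} that supplies it. With the representation choices corrected ($[2]$ for $\mathcal{B}_1$, $[3]$ for $\mathcal{B}_1'$, $[1]$ for $\mathcal{B}_3'$) your outline coincides with the paper's proof, but as written the central computation for \eqref{Bili-H3-1} does not close.
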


\begin{proof}

We prove \eqref{Bili-H3-1-1} in its down-tilde version

\begin{equation}\label{dt-Bili-H3-1-1}
\mathcal{\dt B}_1\equiv2c\dt f f+(a_{n-1}-c)\b f\dtb
f-(a_{n-1}+c)\b{\dt f}\db f.
\end{equation}
In \eqref{dt-Bili-H3-1-1}, for $f$, $\dt f$, $\b f$, $\dtb f$,
$\dt{\b f}$ and $\db f$, we make use of \eqref{1-10}, \eqref{6-8},
\eqref{1-9}, \eqref{2-6}, \eqref{2-7} and \eqref{1-6} respectively,
and get
\begin{align}
    \mathcal{\dt
    B}_1\equiv&\prod_{j=0}^{N-3}(a_{n-1}-b_{m+j})^{-1}\prod_{j=0}^{N-3}(c-b_{m+j})^{-1}\prod_{j=0}^{N-3}(c+b_{m+j})^{-1}
    |B_{[l+1]}B_{[l]}^{-1}|\nonumber\\
    &\times[-|\h{N-3},~\psi(N-2),~\dt\psi(N-2)|_{_{[2]}}\cdot|\h{N-3},~\db\psi(N-2),~\c
    E^3\psi(N-2)|_{_{[2]}}\nonumber\\
    &+|\h{N-3},~\psi(N-2),~\db\psi(N-2)|_{_{[2]}}\cdot|\h{N-3},~\dt\psi(N-2),~\c E^3\psi(N-2)|_{_{[2]}}\nonumber\\
    &-|\h{N-3},~\psi(N-2),~\c E^3\psi(N-2)|_{_{[2]}}\cdot|\h{N-3},~\dt\psi(N-2),~\db\psi(N-2)|_{_{[2]}}]\nonumber\\
    =&0,\nonumber
\end{align}
with the help of Proposition \ref{P:2.2} in which
$\textbf{B}=(\h{N-3})$, $(\textbf{a}, \textbf{b}, \textbf{c},
\textbf{d})=(\psi(N-2), \dt\psi(N-2), \db\psi(N-2), \c
E^3\psi(N-2))$. Here to get the coefficient
$|B_{[l+1]}B_{[l]}^{-1}|$ we request $B_{[l]}B_{[l+1]}^{-1}$ to be
independent of $l$, i.e.
$B_{[l]}B_{[l-1]}^{-1}=B_{[l+1]}B_{[l]}^{-1}$.

$\mathcal{B}_2$ holds thanks to $\mathcal{B}_1$ and the \emph{n-m}
symmetric property of $\psi(n,m,l)$.

We prove $\mathcal{B}_1'$ in its down-tilde shifted version, i.e.
\begin{equation}\label{dt-Bili-H3-2-1}
\mathcal{\dt B}_1'\equiv(b_m+c)\dt{\b f}\h f+(a_n-c)\hb f\dt
f-(a_{n-1}+b_m)\hb{\dt f}f.
\end{equation}
In \eqref{dt-Bili-H3-2-1}, we take $f=|\widehat{N-1}|_{_{[3]}}$, and
$\dt{\b f}$, $\h f$, $\hb f$, $\dt f$ and $\hb{\dt f}$ as
\eqref{6-1}, \eqref{1-4}, \eqref{6-5}, \eqref{1-1} and \eqref{6-24},
respectively. Then we have
\begin{subequations}
\begin{align}
    \mathcal{\dt
    B}_1'\equiv&(a_{n-1}-c)^{-N+2}(b_m+c)^{-N+2}|A_{[m+1]}A_{[m]}^{-1}|\nonumber\\
    &\times[-|\psi(0),~\t{N-2},~\psi(N-1)|_{_{[3]}}\cdot|\t{N-2},~\dt\psi(N-1),~\c
    E^2\psi(N-1)|_{_{[3]}}\nonumber\\
    &+|\psi(0),~\t{N-2},~\dt\psi(N-1)|_{_{[3]}}\cdot|\t{N-2},~\psi(N-1),~\c E^2\psi(N-1)|_{_{[3]}}\nonumber\\
    &-|\psi(0),~\t{N-2},~\c E^2\psi(N-1)|_{_{[3]}}\cdot|\t{N-2},~\psi(N-1),~\dt\psi(N-1)|_{_{[3]}}]\nonumber\\
    =&0,\nonumber
\end{align}
\end{subequations}
where we have made use of Proposition \ref{P:2.2}, in which
$\textbf{B}=(\t{N-2})$, $(\textbf{a}, \textbf{b}, \textbf{c},
\textbf{d})=(\psi(0), \psi(N-1), \dt\psi(N-1), \c E^2\psi(N-1))$.

$\mathcal{B}_2'$ holds thanks to $\mathcal{B}_1'$ and the
\emph{n-m} symmetric property of $\psi(n,m,l)$.

By a down-hat shift, $\mathcal{B}_3'$ reads
\begin{equation}\label{dh-Bilinear-H3-2-3}
\mathcal{\dh B}_3'\equiv(a_n-c)(b_{m-1}+c)\db f\dh{\tb
f}-(a_n+c)(b_{m-1}-c)\b f\t{\dhb f}-2c(a_n-b_{m-1})\dh f\t f.
\end{equation}
In \eqref{dh-Bilinear-H3-2-3}, for $\db f$, $\dh {\tb f}$, $\b f$,
$\dhb {\t f}$, $\dh f$ and $\t f$ we use \eqref{1-12}, \eqref{6-6},
\eqref{1-8}, \eqref{2-8}, \eqref{1-7} and \eqref{6-7} respectively.
Then we have
\begin{subequations}
\begin{align}
    \mathcal{\dh
    B}_3'\equiv&\prod_{i=1}^{N-2}(b_{m-1}-a_{n+i})^{-1}\prod_{i=1}^{N-2}(c-a_{n+i})^{-1}\prod_{i=1}^{N-2}(c+a_{n+i})^{-1}
   |B_{[l+1]}B_{[l]}^{-1}|\nonumber\\
    &\times[|\t{N-2},~\psi(0),~\db\psi(N-1)|_{_{[1]}}\cdot|\t{N-2},~\dh\psi(N-2),~\c
    E^3\psi(N-1)|_{_{[1]}}\nonumber\\
    &-|\t{N-2},~\psi(0),~\dh\psi(N-1)|_{_{[1]}}\cdot|\t{N-2},~\db\psi(N-1),~\c E^3\psi(N-1)|_{_{[1]}}\nonumber\\
    &+|\t{N-2},~\psi(0),~\c E^3\psi(N-1)|_{_{[1]}}\cdot|\t{N-2},~\db\psi(N-1),~\dh\psi(N-1)|_{_{[1]}}]\nonumber\\
    =&0,\nonumber
\end{align}
\end{subequations}
where we have made use of Proposition \ref{P:2.2} in which
$\textbf{B}=(\t{N-2})$, $(\textbf{a}, \textbf{b}, \textbf{c},
\textbf{d})=(\psi(0), \db\psi(N-1), \dh\psi(N-1), \c E^3\psi(N-1))$.
\end{proof}

For the explicit forms of $\psi$ together with the transformation matrices we can take either \eqref{psi1}
or \eqref{psi2}.


\subsection{Non-autonomous Q1 equation}

The non-autonomous Q1 equation can have two different
bilinearizations  which are also similar to their autonomous
cases\cite{HZ-PartII}. First, using the parametrization
\begin{subequations}\label{Q1Trans}
\begin{equation}\label{Tran1}
    p_n=\frac{rc^2}{a_n^2-c^2},~~~~q_m=\frac{rc^2}{b_m^2-c^2},
\end{equation}
where $c$ and $r$ are constants,
and through the transformation
\begin{equation}\label{Tran2}
    u_{n,m}=AV_{n,m}\frac{\b{\b{f}}_{n,m}}{f_{n,m}}+BV_{n,m}^{-1}\frac{\db{\db{f}}_{n,m}}{f_{n,m}},~~AB=\frac{r^2\delta^2}{16},
\end{equation}
where $V_{n,m}$ is \eqref{H3-Cond} and
$\alpha_n=\frac{a_n-c}{a_n+c},~~\beta_m=\frac{b_m-c}{b_m+c}$,
\end{subequations}
the non-autonomous Q1 equation \eqref{Q1} can be transformed to
\eqref{Bili-H3-1}, i.e., one of bilinear form for the non-autonomous
H3 equation. So its solutions consequently follow the Proposition
\ref{P:3.3-H3}. In this case, the connection is
\begin{equation*}\label{Q1-Iden1}
    \mathrm{Q}1\equiv
    (\alpha_n\beta_mU_{n,m}^2A^2\overline{P}_1
    +\frac{r^2}{16}\alpha_n^{-1}\beta_m^{-1}(a_n+c)^{-2}(b_m+c)^{-2}\delta^2P_2
    +\alpha_n^{-1}\beta_m^{-1}U_{n,m}^{-2}B^2\underline{P}_1)/f\t f\h f\h {\t
    f},
\end{equation*}
where
\begin{equation*}
\begin{array}{rl}
  P_1=& Y\t Y-X\h X,~~~~ X=\mathcal{B}_1-2cf\t f,~~ ~~Y=\mathcal{B}_2-2cf\h f,\\
  P_2=&(a_n^2-c^2)(b_m+c)^2 \Bigl( \b{X}\db{{\h X}}- 4c^2 \b f\,\tb f \db {\h f}\db {\th f} \,\Bigr)
       +(a_n^2-c^2)(b_m-c)^2 \Bigl(\db X \bh X -4c^2 \db f\,\db{\t f}\,\hb f\,\th {\b f}\,\Bigr)\\
      &-4c^2(b_m^2-c^2)\Bigl(X \h X -4c^2 f\t f \,\h f\, \th f \, \Bigr)
      -(b_m^2-c^2)(a_n+c)^2 \Bigl(\b Y \db{\t Y}-4c^2 \b f\,\bh f \,\db {\t f}\,\db {\th f}\,
      \Bigr)\\
      &-(b_m^2-c^2)(a_n-c)^2 \Bigl(\db Y \bt Y - 4c^2 \db f\db{\h f}\,\tb f\,\th {\b f}\, \Bigr)
        +4c^2(a_n^2-c^2) \Bigl(Y \t Y-4c^2 f\h f \,\t f\, \th f\, \Bigr).
\end{array}
\end{equation*}

The second bilinear form employs the transformation
\begin{subequations}\label{Q1Tran2}
\begin{equation}
    u_{n,m}=W_{n,m}-(\frac{c^2}{r}-\delta^2r)\frac{g_{n,m}}{f_{n,m}},
\end{equation}
where $c,r$ are constants,
\begin{equation}
W_{n,m}=\sum_{i=n_0}^{n-1}\alpha_i+\sum_{j=m_0}^{m-1}\beta_j,~\alpha_n=p_na_n,~\beta_m=q_mb_m,
~p_n=\frac{c^2/r-\delta^2r}{a_n^2-\delta^2},
~q_m=\frac{c^2/r-\delta^2r}{b_m^2-\delta^2},
\end{equation}
\end{subequations}
and the bilinear form reads
\begin{subequations}\label{Bili-Q1-2}
\begin{eqnarray}
\mathcal Q_1&\equiv&(b_m-\delta){\h{\tb f}}f+(a_n+\delta )
\h{\t f} \b{f}-(a_n+b_m){\tb{f}}\h{f}=0,\label{Bili-Q1-2-1}\\
\mathcal Q_2&\equiv &  (a_n-b_m){\h{\tb f}}f
+(b_m+\delta){\tb{f}}\h{f}-(a_n+\delta )\t{f}{\hb{f}}=0,\label{Bili-Q1-2-2}\\
\mathcal Q_3&\equiv &\t{f}{\hb{f}}-{\tb{f}} \h{f}+(b_m-\delta)
{\hb{f}} \t{g}-(a_n-\delta ){\tb{f}} \h{g}
+ (a_n-b_m)\b{f} \h{\t{g}}=0,\label{Bili-Q1-2-3}\\
\mathcal Q_4&\equiv&(a_n-b_m)({\h{\tb f}} g -\b{f} \h{\t{g}})+
(a_n+b_m)({\tb{f }} \h{g} -{\hb{f }}\t{g}) =0.\label{Bili-Q1-2-4}
\end{eqnarray}
\end{subequations}
The connection is
\begin{equation*}
\mathrm{Q}1 =\frac{(c^2/r-\delta^2r)^3}
{(a_n^2-\delta^2)(b_m^2-\delta^2)(a_n-b_m)(a_n+\delta)\b{f} f \t f
\h f \h{\t f}}\quad \sum^{4}_{i=1}\mathcal{Q}_iP_i,
\end{equation*}
where
\begin{equation*}
\begin{array}{rl}
P_1~=&(a_n-b_m)~[-(a_n-b_m)\t{f}\h{f}g
+(a_n+b_m)f(\h{f} \t{g}-\t{f}\h{g})\\
&\phantom{(a_n-b_m)}-(a_n^2-\delta ^2)\t{f}\h{g}g +(b_m^2-\delta
^2)\h{f}\t{g}g+(a_n^2-b_m^2)f\t{g}\h{g}~],\\
P_2~=&(a_n+b_m)~[(a_n-b_m)\t{f}\h{f}g+(b_m-\delta
)f(\t{f}\h{g}-\h{f}\t{g})
+(a_n-b_m)(b_m-\delta )\t{g}(\h{f}g-f\h{g})],\\
P_3~=&(a_n+b_m)(a_n+\delta)~[(a_n-b_m)\t{f}\h{f}g+(b_m-\delta)\t{f}f\h{g}
-(a_n-\delta)\h{f}f\t{g}~],\\
P_4~=&(a_n+\delta)f[-(a_n-b_m)\t{f}\h{f}+(a_n-\delta)(b_m-\delta
)(\t{f}\h{g}-\h{f}\t{g})].
\end{array}
\end{equation*}
For solutions to \eqref{Bili-Q1-2} we have
\begin{prop}\label{P:3.4-Q1}
The Casoratians
\begin{equation}\label{Q1Caso}
f=|\widehat{N-1}|_{_{[3]}},~g=|-1,\widetilde{N-1}|_{_{[3]}},
\end{equation}
solve the non-autonomous bilinear equations \eqref{Bili-Q1-2}, if
their basic column vector $\psi(n,m,l)$ has symmetric property and
satisfies the shift relations \eqref{H3-1-ditui} with $c=\delta$, as
well as
\begin{equation}
\psi=A_{[n]}A_{[m]}\sigma,~(a_n+\delta)\t\sigma=\sigma+\t{\b\sigma},~(b_m+\delta)\h\sigma=\sigma+\h{\b\sigma},\label{Q1-Recu-2}
\end{equation}
where the matrices $A_{[n]}$, $A_{[m]}$ and their shifts are in an Abelian group.
\end{prop}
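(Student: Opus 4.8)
The plan is to follow the template established in the proofs of Propositions~\ref{P:3.1-H1}--\ref{P:3.3-H3}: verify each of the four bilinear equations \eqref{Bili-Q1-2-1}--\eqref{Bili-Q1-2-4} by rewriting it, after a convenient lattice shift, as a nonzero prefactor times a three-term alternating product of Casoratians sharing a common column block, which then vanishes by the Laplace expansion of Proposition~\ref{P:2.2}. The four equations split naturally into two groups. Equations $\mathcal{Q}_1$ and $\mathcal{Q}_2$ involve only $f=|\widehat{N-1}|_{_{[3]}}$ and are of the same Casoratian type as the H3 forms $\mathcal{B}_1'$, $\mathcal{B}_2'$ in \eqref{Bili-H3-2}; their verification should reuse only the shift relations \eqref{H3-1-ditui} with $c=\delta$. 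Equations $\mathcal{Q}_3$ and $\mathcal{Q}_4$ additionally contain $g=|-1,\widetilde{N-1}|_{_{[3]}}$ and are the genuinely new part requiring the extra recursion \eqref{Q1-Recu-2}.

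Concretely, for each $\mathcal{Q}_i$ I would first pass to a down-shifted representative (a down-tilde or down-hat version, as was done for $\mathcal{B}_1'$ and $\mathcal{B}_3'$) so that every factor becomes a Casoratian to which a single appendix formula applies. I would then substitute, using the Casoratian shift formulae of Appendix~A specialised to $c=\delta$, the expressions for each shifted factor: for $\mathcal{Q}_1,\mathcal{Q}_2$ the factors $\h{\tb f}$, $\tb f$, $\hb f$, $\h{\t f}$, $\b f$, $\h f$, $\t f$; and for $\mathcal{Q}_3,\mathcal{Q}_4$ in addition the corresponding shifts of $g$. After this substitution each bilinear expression should collapse to a common nonzero scalar (a product of differences $a_n\pm\delta$, $b_m\pm\delta$, $a_n-b_m$ and a determinant of transform matrices) multiplying a bracket of the form
\[
|\mathbf{B},\mathbf{a},\mathbf{b}||\mathbf{B},\mathbf{c},\mathbf{d}|-|\mathbf{B},\mathbf{a},\mathbf{c}||\mathbf{B},\mathbf{b},\mathbf{d}|+|\mathbf{B},\mathbf{a},\mathbf{d}||\mathbf{B},\mathbf{b},\mathbf{c}|,
\]
where $\mathbf{B}$ is the common $(N-2)$-column block and $(\mathbf{a},\mathbf{b},\mathbf{c},\mathbf{d})$ are the four boundary columns produced by the shifts; Proposition~\ref{P:2.2} then forces the bracket, hence $\mathcal{Q}_i$, to vanish. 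It is plausible that $\mathcal{Q}_2$ follows from $\mathcal{Q}_1$, and one of $\mathcal{Q}_3,\mathcal{Q}_4$ from the other, by the $n$-$m$ symmetry of $\psi$, which would halve the work; I would check this reduction explicitly before grinding through all four.

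The role of the new relation \eqref{Q1-Recu-2} is to control the shifts of $g=|-1,\widetilde{N-1}|_{_{[3]}}$, whose leading column sits at index $-1$ and whose remaining columns are already tilde-advanced. When $g$ is shifted, the auxiliary vector $\sigma$ and the combined transform $A_{[n]}A_{[m]}$ are what allow each newly created column to be re-expressed as one of the boundary columns $\mathbf{a},\mathbf{b},\mathbf{c},\mathbf{d}$, and the requirement that $A_{[n]}$, $A_{[m]}$ and their shifts lie in an Abelian group should guarantee that the scalar transform determinants pulled out of the two Casoratian factors in each product coincide, so that a single common determinant factors out (exactly as the condition that $B_{[l]}B_{[l+1]}^{-1}$ be $l$-independent was used in the proof of Proposition~\ref{P:3.3-H3}). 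I expect the main obstacle to be precisely this bookkeeping for $\mathcal{Q}_3$ and $\mathcal{Q}_4$: selecting the shifted representative and the appendix formulae so that the mixed $f$- and $g$-terms land against a common block $\mathbf{B}$ with the four boundary columns in the correct slots, and verifying that the accompanying coefficients reproduce exactly the $+,-,+$ signature of Proposition~\ref{P:2.2} rather than some other linear combination.
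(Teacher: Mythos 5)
Your general mechanism (down-shift each $\mathcal{Q}_i$, substitute the Appendix formulae with $c=\delta$, extract transform-matrix determinants using the Abelian-group hypothesis, conclude via Proposition \ref{P:2.2}) is indeed the paper's mechanism for $\mathcal{Q}_1$ and $\mathcal{Q}_2$, but for the $g$-dependent equations \eqref{Bili-Q1-2-3} and \eqref{Bili-Q1-2-4} your central step --- ``each bilinear expression collapses to a common nonzero scalar multiplying a single three-term Laplace bracket'' --- would fail, and this is exactly where the real content of the proof lies. After substituting the shift formulae, the down-tilde-hat version of $\mathcal{Q}_3$ does not become one bracket; it becomes
\begin{equation*}
\mathcal{\dth Q}_3\equiv(a_{n-1}-\delta)^2Y_1-(b_{m-1}-\delta)^2Y_2+(a_{n-1}-\delta)^2(b_{m-1}-\delta)^2Y_3,
\end{equation*}
a combination of three structurally different quadratic expressions with unequal coefficients, so no common scalar factors out; instead each $Y_\mu$ must separately be recognized as a Laplace-type identity killed by Proposition \ref{P:2.2} (with mixed column sets drawn from $f$, $\db f$ and $g$, e.g. $\mathbf{B}=(\t{N-2})$ and boundary columns $\psi(0),\psi(N-1),\psi(-1),E_\mu\psi(-1)$). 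Worse, $\mathcal{Q}_4$ admits no such reduction at all on its own: the paper proves the combination $\mathcal{Q}_4'=\mathcal{Q}_3+\mathcal{Q}_4=0$, because only in this sum do the shifted terms regroup into three vanishing brackets $Z_1,Z_2,Z_3$. These two missing ideas --- the three-bracket decomposition with distinct coefficients, and replacing $\mathcal{Q}_4$ by $\mathcal{Q}_3+\mathcal{Q}_4$ --- are precisely the ``bookkeeping obstacle'' you flag but leave unresolved.

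Your hoped-for symmetry shortcuts also do not materialize. Both $\mathcal{Q}_3$ and $\mathcal{Q}_4$ are \emph{antisymmetric} under the $n$-$m$ swap ($\t{\phantom{a}}\leftrightarrow\h{\phantom{a}}$, $a_n\leftrightarrow b_m$): each maps to minus itself, so neither yields the other, and the swap gives no information. Likewise the mirror of $\mathcal{Q}_1$ is neither $\pm\mathcal{Q}_1$ nor $\pm\mathcal{Q}_2$ (compare the $(a_n-b_m)\h{\tb f}f$ term of \eqref{Bili-Q1-2-2} with the coefficient structure of \eqref{Bili-Q1-2-1}), so $\mathcal{Q}_2$ needs its own proof. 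There the paper uses a device absent from your plan: invoking Proposition \ref{P:2.1} to replace $f=|\h{N-1}|_{_{[3]}}$ by the $m$-direction Casoratian $f=|\h{N-1}|_{_{[2]}}$ before applying the shift formulae, while $\mathcal{Q}_1$ comes for free since its down-hat-bar version coincides with the already-proved down-hat-bar version of $\mathcal{B}_2'$ from Proposition \ref{P:3.3-H3}. Your reading of the role of \eqref{Q1-Recu-2} and of the commuting transform matrices is correct as far as it goes, but without the $Y_\mu$/$Z_\mu$ decompositions, the $\mathcal{Q}_3+\mathcal{Q}_4$ trick, and the direction switch for $\mathcal{Q}_2$, the verification as you describe it would stall on three of the four equations.
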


\begin{proof}

The down-hat-bar shifted version of \eqref{Bili-Q1-2-1} is
\begin{equation}\label{dbh-Bili-Q1-2-1}
   \mathcal{\dhb Q}_1\equiv(b_{m-1}-\delta)\t f\dhb f+(a_n+\delta)
\t{\db f}\dh f-(a_n+b_{m-1})\t{\dh f}\db f,
\end{equation}
and by using \eqref{1-3}, \eqref{6-3}, \eqref{6-4},  \eqref{6-2},
\eqref{2-2} and $\db f=|-1~~\h{N-3}~~\psi(N-2)|_{[3]}$, with
$c=\delta$ we can prove it is true. In fact, it is the same as the
down hat-bar version of $\mathcal{B}_2'$.

By a down-tilde shift \eqref{Bili-Q1-2-2} is written as
\begin{equation}\label{dt-Bili-Q1-2-2}
   \mathcal{\dt Q}_2\equiv(a_{n-1}-b_m){\h{\b f}}\dt f
+(b_m+\delta){\b{f}}\h{\dt f}-(a_{n-1}+\delta )f\hb{\dt f}.
\end{equation}
Thanks to Proposition \ref{P:2.1}, here we use
$f=|\widehat{N-1}|_{_{[2]}}$. Then with the help of \eqref{6-21},
\eqref{1-5}, \eqref{1-11}, \eqref{6-23} and \eqref{6-22}, with
$c=\delta$, we can rewrite \eqref{dt-Bili-Q1-2-2} as
\begin{equation*}\label{l-dt-Bili-Q1-2-2}
\begin{array}{rl}
   \mathcal{\dt Q}_2~\equiv&\prod_{j=1}^{N-2}(\delta+b_{m+j})^{-1}\prod_{j=1}^{N-2}(a_{n-1}-b_{m+j})^{-1}
   |B_{[l+1]}B_{[l]}^{-1}|~[|\widetilde{N-1},~\c
    E^3\psi(N-1)|_{_{[2]}}\\
    & \times |\widehat{N-2},~\dt\psi(N-1)|_{_{[2]}}-|\widehat{N-2},~\c
    E^3\psi(N-1)|_{_{[2]}}\cdot|\widetilde{N-1},~\dt\psi(N-1)|_{_{[2]}}\\
    &-|\widehat{N-1}|_{_{[2]}}\cdot|\widetilde{N-2},~\dt\psi(N-1),~\c
    E^3\psi(N-1)|_{_{[2]}}]\\
    &=0,
   \end{array}
\end{equation*}
where we have made use of Proposition \ref{P:2.2}.

By a down-tilde-hat shift \eqref{Bili-Q1-2-3} is written as
\begin{equation}\label{dth-Bili-Q1-2-3}
   \mathcal{\dth Q}_3\equiv-\dh {\b f}[\dt f+(a_{n-1}-\delta)\dt g]
   +\dt {\b f}[\dh f+(b_{m-1}-\delta)\dh g]+(a_{n-1}-b_{m-1})\dh{\dt{\b
   f}}g.
\end{equation}
 With $c=\delta$, substituting \eqref{6-10},\eqref{5-1}, \eqref{5-3}, \eqref{5-2},
\eqref{6-9}, \eqref{6-11} and \eqref{5-4} into the right-hand side
of \eqref{dth-Bili-Q1-2-3}, we obtain
\begin{equation*}
   \mathcal{\dt{ \dh
   Q}}_3~\equiv(a_{n-1}-\delta)^2Y_1-(b_{m-1}-\delta)^2Y_2+(a_{n-1}-\delta)^2(b_{m-1}-\delta)^2Y_3,
\end{equation*}
where
\begin{subequations}
\begin{align}
Y_{\mu}=&f|E_{\mu}{\psi}(-1),\psi(-1),\t{N-2}|_{_{[3]}}+\db{f}|E_{\mu}{\psi}(-1),\t{N-1}|_{_{[3]}}
-g|E_{\mu}{\psi}(-1),\h{N-2}|_{_{[3]}},
~~\mu=1,~2,\nonumber\\
Y_3=&|\dt{\psi}(-1),\psi(-1),\t{N-2}|_{_{[3]}}|\dh{\psi}(-1),\t{N-1}|_{_{[3]}}
-|\dh{\psi}(-1),\psi(-1),\t{N-2}|_{_{[3]}}|\dt{\psi}(-1),\t{N-1}|_{_{[3]}}\nonumber\\
&+g|\dh{\psi}(-1),\dt{\psi}(-1),\t{N-2}|_{_{[3]}},\nonumber
\end{align}
\end{subequations}
which are zeros in the light of  Proposition \ref{P:2.2} \cite{HZ-PartII}.
Thus, we have proved $\mathcal Q_3=0$.  To prove $\mathcal
Q_4=0$, we go to prove ${\mathcal Q}_4^{'}=\mathcal Q_3+\mathcal Q_4=0$,
i.e.,
\begin{equation}\label{plus-Bili-Q1-2-4}
   \mathcal{Q}_4^{'}\equiv-\h {\b f}[\t f-(a_{n}+\delta)\t g]
  -\t {\b f}[\h f-(b_{m}+\delta)\h g]+(a_{n}-b_{m})\h{\t{\b f}}g=0,
\end{equation}
In the light of \eqref{6-13}, \eqref{5-5}, \eqref{5-7}, \eqref{5-6},
\eqref{6-12}, \eqref{6-14} and \eqref{5-8}, with $c=\delta$ we can
rewrite $\mathcal{Q}_4^{'}$ as
\begin{equation*}
   \mathcal{
   Q}_4^{'}~\equiv(a_{n}+\delta)^2Z_1-(b_{m}+\delta)^2Z_2-(a_{n}+\delta)^2(b_{m}+\delta)^2Z_3,
\end{equation*}
where
\begin{subequations}
\begin{align}
Z_{\mu}=&f|\c E^{\mu}{\psi}(-1),\psi(-1),\t{N-2}|_{_{[3]}}+\db{f}|\c
E^{\mu}{\psi}(-1),\t{N-1}|_{_{[3]}} -g|\c
E^{\mu}{\psi}(-1),\h{N-2}|_{_{[3]}},~~\mu=1, 2,\nonumber\\
Z_3=&|\c E^{1}{\psi}(-1),\psi(-1),\t{N-2}|_{_{[3]}}|\c
E^{2}{\psi}(-1),\t{N-1}|_{_{[3]}}-|\c
E^{2}{\psi}(-1),\psi(-1),\t{N-2}|_{_{[3]}}\nonumber\\
&\times|\c E^{1}{\psi}(-1),\t{N-1}|_{_{[3]}}+g|\c E^{2}{\psi}(-1),\c
E^{1}{\psi}(-1),\t{N-2}|_{_{[3]}}, \nonumber
\end{align}
\end{subequations}
which are also zeros\cite{HZ-PartII} in the light of Proposition \ref{P:2.2}.
Thus we  finish the proof.
\end{proof}

For the explicit forms of $\psi$ together with the transformation matrices we can take either \eqref{psi1} with $c=\delta$
or \eqref{psi2} with $c=\delta$.

\section{Conclusions}

We have derived bilinear forms and Casoratian solutions for the non-autonomous H1, H2, H3$_\delta$ and Q1$_\delta$ models in the
non-autonomous ABS list.
The transformations that we used to fulfill bilinearization are quite similar to those used in the
autonomous cases\cite{HZ-PartII}.
Besides, the bilinear forms and Casoratian structures of solutions are also similar to the autonomous cases.
In addition, in Appendix we listed Caosratian shift formulae for non-autonomous case.

It would be interesting to have a look at the non-autonomous deformation
in terms of the bilinearizations, bilinear equations and Casoratian vectors.
With comparison we can sum up the following deformations from autonomous case to non-autonomous case:
\begin{equation*}
\begin{array}{llcl}
\mathrm{spacing~ paramaters:}& (a,b) & \to& (a_n, b_m),\\
\mathrm{linear~ function:}& an+bm & \to& \sum^{n-1}_{i=n_0}a_i+\sum^{m-1}_{j=m_0}b_j,\\
\mathrm{discre~ exponential~ function:}&
\left(\frac{a+k}{a-k}\right)^n\left(\frac{b+k}{b-k}\right)^m & \to&
\prod_{i=n_0}^{n-1}(\frac{a_i+k}{a_i-k})\prod_{j=m_0}^{m-1}(\frac{b_j+k}{b_j-k})
\end{array}
\end{equation*}
We note that many discrete bilinear equations can be deautonomised
using these deformations (see \cite{Kajiwara-Ohta-1}). Besides,
actually, here we have seen  that these deformations can well keep
the correspondence of autonomous and non-autonomous ABS lattice
equations.

\subsection*{Acknowledgements}
This project is supported by the NSF of China (11071157), Shanghai
Leading Academic Discipline Project (No.J50101) and Postgraduate
Innovation Foundation of Shanghai University (No. SHUCX111027).

\begin{appendix}
\section{Casoratian shift formulae}
\label{A:a}

We derive shift formulae for the Casoratians
\begin{equation}
    f=|\h {N-1}|_{_{[\nu]}},~g=|\h {N-2}, N|_{_{[\nu]}},~h=|\h {N-2},~s=|\h {N-3}, N-1, N|_{_{[\nu]}},
    N+1|_{_{[\nu]}},
\end{equation}
for $\nu=1, 2, 3,$,
where the basic column vector $\psi(n,m,l)$ satisfies the following relation
\begin{subequations}\label{Appendix-1}
\begin{align}\label{Appendix-1-1}
(a_{n-1}-c)\dt\psi=\psi-\dt{\b
\psi},~~(b_{m-1}-c)\dh\psi=\psi-\dh{\b \psi},
\end{align}
and its auxiliary vectors $\omega(n,m,l)$, $\phi(n,m,l)$,
$\zeta(n,m,l)$ and $\sigma(n,m,l)$ satisfy
\begin{align}
&\psi=A_{[n]}\omega,~~(a_n+c)\t\omega=\omega+\t{\b
\omega},\label{Appendix-1-2}\\
&\psi=A_{[m]}\phi,~~(a_n+c)\h\phi=\phi+\h{\b
\phi},\label{Appendix-1-3}\\
&\psi=B_{[l]}\zeta,~~(c+b_m)\b\zeta=\zeta+\h{\b \zeta},\label{Appendix-1-4}\\
&\psi=A_{[n]}A_{[m]}\sigma,~~(a_n+c)\t\sigma=\sigma+\t{\b\sigma},~~(b_m+c)\h\sigma=\sigma+\h{\b\sigma},\label{Appendix-1-5}
\end{align}
\end{subequations}
where $c$ is a constant, $A_{[n]}$,~$A_{[m]}$ and $B_{[l]}$ only depends on $n$,
$m$ and $l$, respectively, $A_{[n]}, A_{[m]}$ and their shifts are in an Abelian group, and the matrix product
$B_{[l+1]}B_{[l]}^{-1}$ is independent of $l$.

We give two explicit forms for  $\psi$ satisfying the above criterion \eqref{Appendix-1}.
One is
\begin{subequations}\label{psi1}
\begin{eqnarray}
\psi(n,m,l)&=&\psi^+(n,m,l)+\psi^-(n,m,l),\\
\psi^{\pm}(n,m,l)&=&(\psi^\pm_1(n,m,l),\psi^\pm_2(n,m,l),\cdots,\psi^\pm_{N}(n,m,l))^T,
\end{eqnarray}
with
\begin{equation}
\psi_r^\pm(n,m,l)=\rho_r^\pm(c\pm
k_r)^l\prod_{i=n_{_0}}^{n-1}(a_i\pm
k_r)\prod_{j=m_{_0}}^{m-1}(b_j\pm k_r),~~r=1,2,\cdots,N,
\end{equation}
where $\rho_r^\pm$ and $k_r$ are constants.
The available  transform matrices are
\begin{eqnarray}
A_{[n]}=\mbox{Diag}(A_{{[n]}_{1}}(k_1,n),\cdots,A_{{[n]}_{N}}(k_N,n)),~~
&A_{{[n]}_{r}}(k_r,n)=\prod_{i=n_0}^{n-1}(a_i^2-k_r^2),\label{An}\\
A_{[m]}=\mbox{Diag}(A_{{[m]}_{1}}(k_1,m),\cdots,A_{{[m]}_{N}}(k_N,m)),~~
&A_{{[m]}_{r}}(k_r,m)=\prod_{j=m_0}^{m-1}(b_j^2-k_r^2),\label{Amm}\\
B_{[l]}=\mbox{Diag}(B_{{[l]}_{1}}(k_1,l),\cdots,B_{{[l]}_{N}}(k_N,l)),~~
&B_{{[l]}_{r}}(k_r,l)=(c^2-k_r^2)^l,\label{Bl}
\end{eqnarray}
\end{subequations}
and the auxiliary vectors $\omega,\phi,\zeta,\sigma$
are correspondingly defined by these transform matrices and $\psi$ through \eqref{Appendix-1-2}-\eqref{Appendix-1-5}.
Another explicit form of $\psi$ is
\begin{subequations}\label{psi2}
\begin{eqnarray}
\psi(n,m,l)&=&\mathcal{A}_+\psi^+(n,m,l)+\mathcal{A}_-\psi^-(n,m,l),\\
\psi^{\pm}(n,m,l)&=&(\psi^\pm_1(n,m,l),\psi^\pm_2(n,m,l),\cdots,\psi^\pm_{N}(n,m,l))^T,
\end{eqnarray}
with
\begin{equation}
\psi^\pm_r(n,m,l)=\frac{1}{(r-1)!}\partial_{k_1}^{r-1}[\rho_1^\pm(c\pm
k_1)^l\prod_{i=n_{_0}}^{n-1}(a_i\pm
k_1)\prod_{j=m_{_0}}^{m-1}(b_j\pm k_1)],~~r=1,2,\cdots,N,
\end{equation}
where $\mathcal{A}_{\pm}$ are two arbitrary non-singular lower
triangular Toeplitz matrix(see \cite{ZDJ-Wronskian}). The available
transform matrices of this case are
\begin{eqnarray}
A_{[n]}=(a_{s,i}(k_1))_{N\times N},~a_{s,i}(k_1)&=&\left\{
\begin{array}{ll}
 \frac{\partial_{k_1}^{s-i}}{(s-i)!}\displaystyle\prod_{i=n_0}^{n-1}(a_i^2-k_1^2), & \hbox{$s\geq i$,} \\
 0, &
\hbox{$s<i$,}
\end{array}
\right.  ~s,i=1,\cdots,N,\\
A_{[m]}=(a_{s,j}(k_1))_{N\times N},~a_{s,j}(k_1)&=&\left\{
\begin{array}{ll}
\frac{\partial_{k_1}^{s-j}}{(s-j)!}\displaystyle\prod_{j=m_0}^{m-1}(b_j^2-k_1^2), & \hbox{$s\geq j$,} \\
 0, &
\hbox{$s<j$,}
\end{array}
\right. ~s,j=1,\cdots,N,~~~~~~\\
B_{[l]}=(b_{s,j}(k_1))_{N\times N},~b_{s,j}(k_1)&=&\left\{
\begin{array}{ll}
 \frac{\partial_{k_1}^{s-j}}{(s-j)!}(c^2-k_1^2)^l, & \hbox{$s\geq j$,} \\
 0, &
\hbox{$s<j$,}
\end{array}
\right.  ~s,j=1,\cdots,N.
\end{eqnarray}
\end{subequations}
The auxiliary vectors $\omega,\phi,\zeta,\sigma$ are also
correspondingly defined by these transform matrices and $\psi$
through \eqref{Appendix-1-2}-\eqref{Appendix-1-5}. Since $A_{[n]},
A_{[m]}, B_{[l]}$ are non-singular lower triangular Toeplitz
matrices which compose an Abelian group (see \cite{ZDJ-Wronskian}),
the commutative property holds automatically and the matrix product
$B_{[l+1]}B_{[l]}^{-1}$ is independent of $l$ (see
\cite{SZ-SIGMA-2010}).

In the following we list some Casoratian shift formulae.
For convenience, we define operators $\c E^\nu$, $\nu=1, 2, 3$, as
follows
\begin{equation}\label{Appendix-3}
\c E^1\psi=A_{[n]}A_{[n+1]}^{-1}E^1\psi,~~\c
E^2\psi=A_{[m]}A_{[m+1]}^{-1}E^2\psi,~~ \c
E^3\psi=B_{[l]}B_{[l+1]}^{-1}E^3\psi.
\end{equation}
%
These Casoratian shift formulae are
\begin{subequations}\label{1}
\begin{align}
&(a_{n-1}-c)^{N-1}\dt f_{_{[3]}}=|\h{N-2},~\dt \psi(N-1)|_{_{[3]}},\label{1-1}\\
&(a_{n-1}-c)^{N-2}\dt f_{_{[3]}}=-|\h{N-2},~\dt \psi(N-2)|_{_{[3]}},\label{1-2}\\
&(b_{m}+c)^{N-1}\h f_{_{[3]}}=|A_{[m+1]}A_{[m]}^{-1}|\cdot|\h{N-2},~\c E^2\psi(N-1)|_{_{[3]}},\label{1-4}\\
&(a_{n}+c)^{N-2}\t f_{_{[3]}}=|A_{[n+1]}A_{[n]}^{-1}|\cdot|\h{N-2},~\c E^1\psi(N-2)|_{_{[3]}},\label{1-3}\\
&(b_{m-1}-c)^{N-2}\dh f_{_{[3]}}=-|\h{N-2},~\dh \psi(N-2)|_{_{[3]}},\label{6-2}\\
&(b_{m-1}-c)^{N-1}\db {\dh f}_{_{[3]}}=|-1,~\h{N-3},~\dh \psi(N-2)|_{_{[3]}},\label{6-3}\\
&(b_{m}+c)^{N-2}\h f_{_{[3]}}=|A_{[m+1]}A_{[m]}^{-1}
|\cdot|\h{N-2},~\c E^2\psi(N-2)|_{_{[3]}},~\label{6-19}\\
&(b_{m}+c)^{N-2}\h {\b f}_{_{[3]}}=|A_{[m+1]}A_{[m]}^{-1}
|\cdot|\t{N-1},~\c E^2\psi(N-1)|_{_{[3]}};\label{6-5}
\end{align}
\end{subequations}
\begin{subequations}\label{2}
\begin{align}
&\prod_{j=0}^{N-2}(a_{n-1}-b_{m+j})\dt f_{_{[2]}}=|\h{N-2},~\dt \psi(N-1)|_{_{[2]}},\label{1-5}\\
&\prod_{j=0}^{N-3}(a_{n-1}-b_{m+j})\dt f_{_{[2]}}=-|\h{N-2},~\dt \psi(N-2)|_{_{[2]}},\label{1-10}\\
&\prod_{j=0}^{N-3}(c-b_{m+j})\db f_{_{[2]}}=-|\h{N-2},~\db \psi(N-2)|_{_{[2]}},\label{1-6}\\
&\prod_{i=0}^{N-2}(c-a_{n+i})\db f_{_{[1]}}=|\h{N-2},~\db \psi(N-1)|_{_{[1]}},\label{1-12}\\
&\prod_{i=0}^{N-2}(c+a_{n+i})\b f_{_{[1]}}=|B_{[l+1]}B_{[l]}^{-1}|\cdot|\h{N-2},~\c E^3 \psi(N-1)|_{_{[1]}},\label{1-8}\\
&\prod_{j=0}^{N-3}(c+a_{n+i})\b
f_{_{[1]}}=|B_{[l+1]}B_{[l]}^{-1}|\cdot|\h{N-2},~\c E^3
\psi(N-2)|_{_{[1]}},\label{1-13}\\
&\prod_{j=0}^{N-2}(c+b_{m+j})\b
f_{_{[2]}}=|B_{[l+1]}B_{[l]}^{-1}|\cdot|\h{N-2},~\c E^3
\psi(N-1)|_{_{[2]}},\label{1-11}\\
&\prod_{j=0}^{N-3}(c+b_{m+j})\b
f_{_{[2]}}=|B_{[l+1]}B_{[l]}^{-1}|\cdot|\h{N-2},~\c E^3
\psi(N-2)|_{_{[2]}},~\label{1-9}\\
&\prod_{j=1}^{N-2}(c+b_{m+j})\hb
f_{_{[2]}}=|B_{[l+1]}B_{[l]}^{-1}|\cdot|\t{N-1},~\c E^3
\psi(N-1)|_{_{[2]}},~\label{6-21}\\
&\prod_{i=0}^{N-2}(b_{m-1}-a_{n+i})\dh f_{_{[1]}}=|\h{N-2},~\dh \psi(N-1)|_{_{[1]}};\label{1-7}
\end{align}
\end{subequations}
\begin{subequations}\label{3}
\begin{align}
&\!\!(b_{m-1}+a_{n})(a_{n}+c)^{N-2}(b_{m-1}-c)^{N-2}\t{\dh
f}_{_{[3]}}\!\!=\!\!|A_{[n+1]}A_{[n]}^{-1}\!|\cdot|\h{N-3},\dh \psi(N-2),\c E^1\psi(N-2)|_{_{[3]}},\label{2-2}\\
&(a_{n-1}+b_{m})(b_{m}+c)^{N-2}(a_{n-1}-c)^{N-2}\dt{\h
f}_{_{[3]}}=|A_{[m+1]}A_{[m]}^{-1}|\cdot|\h{N-3},\dt \psi(N-2),\c E^2 \psi(N-2)|_{_{[3]}},\label{2-3}\\
&(a_{n-1}-b_{m-1})(a_{n-1}-c)^{N-2}(b_{m-1}-c)^{N-2}\dth
f_{_{[3]}}=|\h{N-3},~\dh \psi (N-2),~\dt \psi
(N-2)|_{_{[3]}},\label{2-1}\\
&(a_{n}+c)^{N-1}\t {\db f}_{_{[3]}}=|A_{[n+1]}A_{[n]}^{-1}
|\cdot|-1,~\h{N-3},~\c E^1\psi(N-2)|_{_{[3]}},\label{6-4}\\
&(a_{n-1}-c)^{N-2}\dt {\b f}_{_{[3]}}=-|\t{N-1},~\dt \psi(N-1)|_{_{[3]}}, \label{6-1}\\
&\!(a_{n-1}+b_m)\!(b_m+c)^{N-2}\!(a_{n-1}-c)^{N-2}\!\dt{\hb
f}_{_{[3]}}=|A_{[m+1]}A_{[m]}^{-1}\!|\cdot|\t{N-2},\!\dt\psi(N-1),\c
E^2\psi(N-1)|_{_{[3]}}\!, \label{6-24}\\
&\prod_{j=1}^{N-2}(a_{n-1}-b_{m+j})\dt {\h f}_{_{[2]}}=
-|\t{N-1},~\dt \psi(N-1)|_{_{[2]}}, \label{6-23}\\
\allowdisplaybreaks[4]&(a_{n-1}-c)\prod_{j=0}^{N-3}(c-b_{m+j})\prod_{j=0}^{N-3}(a_{n-1}-b_{m+j})\db{\dt
f}_{_{[2]}}=|\h{N-3},~\db \psi(N-2),~\dt\psi(N-2)|_{_{[2]}},\label{2-6}\\
&2c\prod_{j=0}^{N-3}(c+b_{m+j})\prod_{j=0}^{N-3}(c-b_{m+j})f_{_{[2]}}=
|B_{[l+1]}B_{[l]}^{-1} |\cdot |\h{N-3},~\db \psi(N-2),~\c
E^3\psi(N-2)|_{_{[2]}},\label{6-8}\\
&2c\prod_{i=1}^{N-2}(c+a_{n+i})\prod_{i=1}^{N-2}(c-a_{n+i})\t
f_{_{[1]}}=|B_{[l+1]}B_{[l]}^{-1} |\cdot |\t{N-2},~\db \psi(N-1),~\c
E^3\psi(N-1)|_{_{[1]}},\label{6-7}\\
&(b_{m-1}-c)\prod_{i=1}^{N-2}(b_{m-1}-a_{n+i})\prod_{i=1}^{N-2}(c-a_{n+i})\t{\dhb
f}_{_{[1]}}=-|\t{N-2},~\dh \psi(N-1),~\db
\psi(N-1)|_{_{[1]}},\label{2-8}\\
&\!\!(a_{n-1}+c)\!\!\prod_{j=0}^{N-3}(c+b_{m+j}\!\!)\prod_{j=0}^{N-3}(a_{n-1}-b_{m+j})\b{\dt
f}_{_{[2]}}\!\!=|B_{[l+1]}B_{[l]}^{-1} \!\!|\cdot|\h{N-3},\dt
\psi(N-2),\c E^3\psi(N-2)|_{_{[2]}},\label{2-7}\\
&(a_{n-1}+c)\!\!\prod_{j=1}^{N-2}\!\!(c+b_{m+j})\!\!\prod_{j=1}^{N-2}\!\!(a_{n-1}-b_{m+j})\hb{\dt
f}_{_{[2]}}\!\!=|B_{[l+1]}B_{[l]}^{-1}\!|\cdot|\t{N-2},\dt
\psi(N-1),\c
E^3\psi(N-1)|_{_{[2]}}, \label{6-22}\\
&(b_{m-1}+c)\!\!\prod_{i=1}^{N-2}\!(c+a_{n+i})\!\!\prod_{i=1}^{N-2}(b_{m-1}-a_{n+i})\t{\b{\dh
f}}_{_{[1]}}\!\!=\!|B_{[l+1]}B_{[l]}^{-1} | \cdot |\t{N-2},\dh
\psi(N-1),\c E^3\psi(N-1)|_{_{[1]}};\label{6-6}
\end{align}
\end{subequations}
\begin{subequations}\label{4}
\begin{align}
&(a_{n-1}-c)^{N-2}[\dt g_{_{[3]}}+(a_{n-1}-c)\dt f_{_{[3]}}]
  =-|\h{N-3},~\psi(N-1),~\dt \psi(N-2)|_{_{[3]}},\label{3-1}\\
&(b_{m}+c)^{N-2}[\h g_{_{[3]}}-(b_{m}+c)\h f_{_{[3]}}]
  =|A_{[m+1]}A_{[m]}^{-1}|\cdot|\h{N-3},\psi(N-1),\c E^2
  \psi(N-2)|_{_{[3]}},\label{3-2}\\
&(a_{n-1}-c)^{N-2}[\dt h_{_{[3]}}+(a_{n-1}-c)\dt g_{_{[3]}}]
  =-|\h{N-3},~\psi(N),~\dt \psi(N-2)|_{_{[3]}},\label{4-1}\\
&(b_{m}+c)^{N-2}[\h h_{_{[3]}}-(b_{m}+c)\h g_{_{[3]}} ]
  =|A_{[m+1]}A_{[m]}^{-1}|\cdot|\h{N-3},~\psi(N),~\c E^2
  \psi(N-2)|_{_{[3]}},\label{4-2}\\
  &(b_{m-1}-c)^{N-2}[\dh g_{_{[3]}}+(b_{m-1}-c)\dh f_{_{[3]}}]
  =-|\h{N-3},~\psi(N-1),~\dh \psi(N-2)|_{_{[3]}},\label{6-15}\\
&(b_{m-1}-c)^{N-2}[(b_{m-1}-c)\dh g_{_{[3]}}+\dh h_{_{[3]}}]
  =-|\h{N-3},~\psi(N),~\dh \psi(N-2)|_{_{[3]}};\label{6-25}
\end{align}
\end{subequations}
\begin{subequations}\label{5}
\begin{align}
&\dh f_{_{[3]}}=\db f_{_{[3]}}-(b_{m-1}-c)|\dh \psi(-1)~~\h{N-2}|_{_{[3]}},\label{6-9}\\
&\dh{\b f}_{_{[3]}}=f_{_{[3]}}-(b_{m-1}-c)g_{_{[3]}}+
(b_{m-1}-c)^2|\dh \psi(-1)~~\t{N-1}|_{_{[3]}},~\label{6-10}\\
&\dh g_{_{[3]}}=|\dh \psi(-1)~~\h{N-2}|_{_{[3]}}-(b_{m-1}-c)
|\dh \psi(-1),~\psi(-1),~\t{N-2}|_{_{[3]}},\label{6-11}\\
&\dt f_{_{[3]}}=\db f_{_{[3]}}-(a_{n-1}-c)|\dt \psi(-1),~\h{N-2}|_{_{[3]}},\label{5-1}\\
&\dt{\b f}_{_{[3]}}=f_{_{[3]}}-(a_{n-1}-c)g+(a_{n-1}-c)^2|\dt \psi(-1),~\t{N-1}|_{_{[3]}},\label{5-2}\\
&\dt g_{_{[3]}}=|\dt \psi(-1),~\h{N-2}|_{_{[3]}}-(a_{n-1}-c)|\dt \psi(-1),~\psi(-1),~\t{N-2}|_{_{[3]}},\label{5-3}\\
&(-1)^N\t f_{_{[3]}}=|A_{[n+1]}A_{[n]}^{-1}|[\db f_{_{[3]}}-(a_{n}+c)|\c E^1 \psi(-1),~\h{N-2}|_{_{[3]}}],\label{5-5}\\
&(-1)^N\t{\b f}_{_{[3]}}=|A_{[n+1]}A_{[n]}^{-1}|[f_{_{[3]}}+(a_n+c)g-(a_{n}+c)^2|\c E^1  \psi(-1),~\t{N-1}|_{_{[3]}}],\label{5-6}\\
&(-1)^N\t g_{_{[3]}}=|A_{[n+1]}A_{[n]}^{-1}|[\!-|\c E^1
\psi(-1),~\h{N-2}|_{_{[3]}}-(a_{n}+c)\!\!|\c E^1
\psi(-1),~\psi(-1),~\t{N-2}|_{_{[3]}}],\label{5-7}\\
&(-1)^N\h f_{_{[3]}}=|A_{[m+1]}A_{[m]}^{-1}
|\cdot[\db f_{_{[3]}}-(b_{m}+c)|\c F^2 \psi(-1),~\h{N-2}|_{_{[3]}}],\label{6-12}\\
&(-1)^N\h{\b f}_{_{[3]}}=|A_{[m+1]}A_{[m]}^{-1}
|\cdot[f_{_{[3]}}+(b_m+c)g-(b_{m}+c)^2|\c E^2\psi(-1),~\t{N-1}|_{_{[3]}}],\label{6-13}\\
&\!\!(-1)^N\h g_{_{[3]}}=\!|A_{[m+1]}A_{[m]}^{-1}|\cdot[-|\c
E^2 \psi(-1),\h{N-2}|_{_{[3]}}-\!(b_{m}+c)|\c
E^2\psi(-1),\psi(-1),\t{N-2}|_{_{[3]}}],\label{6-14}\\
&\!\!(a_{n-1}\!-b_{m-1})\!\dth{\b f}_{_{[3]}}\!=\!\!(a_{n-1}-b_{m-1})\db
f_{_{[3]}}\!-\!\!(a_{n-1}\!-c)^2 |\dt \psi(-1),\h{N-2}|_{_{[3]}}\!\!
+\!\!(b_{m-1}\!-c)^2|\dh\psi(-1),\h{N-2}|_{_{[3]}}\!\!\nonumber\\
&+\!\!(a_{n-1}\!-c)^2(b_{m-1}\!-c)\!|\dt
\psi(-1),\psi(-1),\t{N-2}|_{_{[3]}}\!\!
-(a_{n-1}\!-c)(b_{m-1\!}-c)^2|\dh\psi(-1),\psi(-1),\t{N-2}|_{_{[3]}}\!\!\nonumber\\
&+(a_{n-1}-c)^2(b_{m-1}-c)^2
|\dh\psi(-1),\dt \psi(-1),\t{N-2}|_{_{[3]}},\label{5-4}\\
&(a_n-b_m)\th{\b
f}_{_{[3]}}=|A_{[n+1]}A_{[n]}^{-1}||A_{[m+1]}A_{[m]}^{-1}|[(a_n-b_m)\db
f_{_{[3]}}-(a_n+c)^2|\c
E^1\psi(-1),\h{N-2}|_{_{[3]}}\nonumber\\
&~~+(b_m+c)^2|\c E^2\psi(-1),\h{N-2}|_{_{[3]}}-(a_n+c)^2(b_m+c)|\c
E^1\psi(-1), \psi(-1), \t{N-2}|_{_{[3]}}\nonumber\\
&~~+(a_n\!+c)(b_m\!+c)^2|\c E^2\psi(-1),\psi(-1), \t{N-2}|_{_{[3]}}\nonumber\\
&~~ -(a_n\!+c)^2(b_m+c)^2|\c E^2\psi(-1), \c
E^1\psi(-1), \t{N-2}|_{_{[3]}}].\label{5-8}
\end{align}
\end{subequations}

In fact, these formulae can be proved in a similar way as in \cite{HZ-PartII,SZ-SIGMA-2010,ZH-H1}.
We need to use the following relations which are derived from \eqref{Appendix-1}:
\begin{subequations}\label{Appendix-2}
\begin{align}
&(a_{n-1}-b_m)\dt\psi=\psi-\dt{\h\psi},~~(b_{m-1}-a_n)\dh\psi=\psi-\dh{\t\psi},\label{Appendix-2-1}\\
&(a_n+c)\t\psi=A_{[n+1]}A_{[n]}^{-1}\psi+\t{\b\psi},~~(b_{m-1}+a_n)\dh{\t\psi}=\t\psi+A_{[n+1]}A_{[n]}^{-1}\dh\psi,
\label{Appendix-2-2}\\
&(c+b_m)\b\psi=B_{[l+1]}B_{[l]}^{-1}\psi+\h{\b\psi},~
(a_{n-1}+c)\b{\dt\psi}=\b\psi+B_{[l+1]}B_{[l]}^{-1}\dt\psi,~2c\psi=\b\psi+B_{[l]}B_{[l-1]}^{-1}\db\psi,\label{Appendix-2-3}\\
&(b_m+c)\h\psi=A_{[m+1]}A_{[m]}^{-1}\psi+\h{\b\psi},~~(a_{n-1}+b_m)\h{\dt\psi}=\h\psi+A_{[m+1]}A_{[m]}^{-1}\dt\psi.\label{Appendix-2-4}
\end{align}
\end{subequations}
In the following as examples we only prove \eqref{1-5} and \eqref{2-7}.
Let us prove \eqref{1-5}. For $\dt
f_{_{[2]}}$,  using the relation \eqref{Appendix-2-1} we first have
\begin{align*}
(a_{n-1}-b_{m})\dt
f_{_{[2]}}&=|(a_{n-1}-b_{m})\dt\psi(0),~\dt\psi(1),~\cdots,~\dt\psi(N-1)|_{_{[2]}}\\
&=|\psi(0),~\dt\psi(1),~\cdots,~\dt\psi(N-1)|_{_{[2]}}.
\end{align*}
Then, for the second column,
\begin{align*}
(a_{n-1}-b_{m+1})(a_{n-1}-b_{m})\dt
f_{_{[2]}}&=|\psi(0),~(a_{n-1}-b_{m+1})\dt\psi(1),~\cdots,~\dt\psi(N-1)|_{_{[2]}}\\
&=|\psi(0),~\psi(1),~\cdots,~\dt\psi(N-1)|_{_{[2]}}.
\end{align*}
Repeating this procedure we reach to
\[
\prod_{j=0}^{N-2}(a_{n-1}-b_{m+j})\dt f_{_{[2]}}=|\h{N-2},~
\dt\psi(N-1)|_{_{[2]}},\]
i.e., \eqref{1-5}.
Next, let us prove \eqref{6-8}. Based on $\b{\eqref{1-6}}$ and using
\eqref{Appendix-2-3}, we first have
\begin{align*}
(c+b_{m})\prod_{j=0}^{N-3}(c-b_{m+j})f_{_{[2]}}
&=-|(c+b_{m})\b\psi(0),~\b\psi(1),~\cdots,~\b\psi(N-2),~
\psi(N-2)|_{_{[2]}}\\
&=-|B_{[l+1]}B_{[l]}^{-1}\psi(0),~\b\psi(1),~\cdots,~\b\psi(N-2),~
\psi(N-2)|_{_{[2]}}.
\end{align*}
For the second column, we have
\begin{align*}
(c+b_{m+1})(c+b_{m})\prod_{j=0}^{N-3}(c-b_{m+j})f_{_{[2]}}&\\
=-|B_{[l+1]}B_{[l]}^{-1}\psi(0)&,~(c+b_{m+1})\b\psi(1),~
\cdots,\b\psi(N-2),~\psi(N-2)|_{_{[2]}}\\
=-|B_{[l+1]}B_{[l]}^{-1}\psi(0)&,~B_{[l+1]}B_{[l]}^{-1}\psi(1),~
\cdots,~\b\psi(N-2),~\psi(N-2)|_{_{[2]}}.
\end{align*}
Repeating this procedure, we reach to
\begin{align*}
&\prod_{j=0}^{N-3}(c+b_{m+j})\prod_{j=0}^{N-3}(c-b_{m+j})f_{_{[2]}}\\
&~~~=-|B_{[l+1]}B_{[l]}^{-1}||\h{N-3},
B_{[l]}B_{[l+1]}^{-1}\b\psi(N-2),
B_{[l]}B_{[l+1]}^{-1}\psi(N-2)|_{_{[2]}}\\
&~~~=|B_{[l+1]}B_{[l]}^{-1}||\h{N-3},
B_{[l]}B_{[l+1]}^{-1}\psi(N-2),\c E^3\psi(N-2)|_{_{[2]}},
\end{align*}
where we have rewritten $B_{[l]}B_{[l+1]}^{-1}\b\psi(N-2)$ by $\c E^3\psi(N-2)|_{_{[2]}}$.
Then, still using \eqref{Appendix-2-3} and the fact that $B_{[l+1]}B_{[l]}^{-1}$ is independent of $l$, we have
\begin{align*}
&2c\prod_{j=0}^{N-3}(c+b_{m+j})\prod_{j=0}^{N-3}(c-b_{m+j})\dt
{\b f}_{_{[2]}}\\
&~~~=|B_{[l+1]}B_{[l]}^{-1}||\h{N-3},
B_{[l]}B_{[l+1]}^{-1}2c\,\psi(N-2),
\c E^3\psi(N-2)|_{_{[2]}}\\
&~~~=|B_{[l+1]}B_{[l]}^{-1}||\h{N-3},B_{[l]}B_{[l+1]}^{-1}B_{[l]}B_{[l-1]}^{-1}\db\psi(N-2),~\c
E^3\psi(N-2)|_{_{[2]}}\\
&~~~=|B_{[l+1]}B_{[l]}^{-1}||\h{N-3},\db\psi(N-2),~\c
E^3\psi(N-2)|_{_{[2]}},
\end{align*}
which is \eqref{6-8}.

\end{appendix}


\end{document}